\def\ci{\perp\!\!\!\perp}
\newtheorem{theorem}{Theorem}
\newtheorem{proposition}{Proposition}
\newtheorem{lemma}{Lemma}
\newtheorem*{remark}{Remark}
\newenvironment{thma}[1]{\par\noindent{\bf Theorem #1.\ }\em}{\em}
\newenvironment{lma}[1]{\par\noindent{\bf Lemma #1.\ }\em}{\em}
\newenvironment{propa}[1]{\par\noindent{\bf Proposition #1.\ }\em}{\em}
\newcommand{\stkout}[1]{\ifmmode\text{\sout{\ensuremath{#1}}}\else\sout{#1}\fi}
\DeclareMathOperator*{\Perp}{\perp\!\!\!\perp}
\begin{document}
\title{
Zero Inflation as a Missing Data Problem: a Proxy-based Approach
}


\author[1]{\vspace{-2em}Trung Phung\thanks{tphung1@jhu.edu}}
\author[1]{Jaron J.R. Lee} 
\author[2]{Opeyemi Oladapo-Shittu} 
\author[2]{Eili Y. Klein} 
\author[1,2,3,4]{Ayse Pinar Gurses} 
\author[3]{Susan M. Hannum} 
\author[5,6]{Kimberly Weems} 
\author[3,4]{Jill A. Marsteller} 
\author[2,4,5]{Sara E. Cosgrove} 
\author[2,4]{Sara C. Keller} 
\author[1]{Ilya Shpitser} 

\affil[1]{Johns Hopkins Whiting School of Engineering, Baltimore, MD}
\affil[2]{Johns Hopkins University School of Medicine, Baltimore, MD}
\affil[3]{Johns Hopkins Bloomberg School of Public Health, Baltimore, MD}
\affil[4]{Johns Hopkins Medicine, Baltimore, MD}
\affil[5]{Johns Hopkins Health System, Baltimore, MD}
\affil[6]{Vassar Brothers Medical Center, Poughkeepsie, NY}

\maketitle

\begin{abstract}
\vspace{-2em}
A common type of zero-inflated data has certain true values incorrectly replaced by zeros due to data recording conventions (rare outcomes assumed to be absent) or details of data recording equipment (e.g. artificial zeros in gene expression data).

Existing methods for zero-inflated data either fit the observed data likelihood via parametric mixture models that explicitly represent excess zeros, or aim to replace excess zeros by imputed values.  If the goal of the analysis relies on knowing true data realizations, a particular challenge with zero-inflated data is identifiability, since it is difficult to correctly determine which observed zeros are real and which are inflated.

This paper views zero-inflated data as a general type of missing data problem, where the observability indicator for a potentially censored variable is itself unobserved whenever a zero is recorded. We show that, without additional assumptions, target parameters involving a zero-inflated variable are not identified. However, if a proxy of the missingness indicator is observed, a modification of the effect restoration approach of Kuroki and Pearl allows identification and estimation, given the proxy-indicator relationship is known.

If this relationship is unknown, our approach yields a partial identification strategy for sensitivity analysis. Specifically, we show that only certain proxy-indicator relationships are compatible with the observed data distribution. We give an analytic bound for this relationship in cases with a categorical outcome, which is sharp in certain models.
For more complex cases, sharp numerical bounds may be computed using methods in \citet{duarte23automated}.

We illustrate our method via simulation studies and a data application on central line-associated bloodstream infections (CLABSIs).
\vspace{-2em}
\end{abstract}

\section{Introduction}

Zero-inflated (ZI) data is prevalent in many empirical sciences such as
public health, epidemiology, computational biology, and medical research.
An important type of zero inflation occurs when some observed zeros of an outcome of interest do not represent true zero values.

As an example, consider patient surveillance for complications in outpatient settings, where any complication developed outside the hospital is of interest.
One such complication is a central line-associated bloodstream infection (CLABSI) which can occur in patients undergoing therapies involving central venous catheters (CVCs).
Such complications are fairly rare, but are associated with significant morbidity and mortality, and their prevalence is often assessed retrospectively.  Because of this, absence of sufficient information on whether such a complication is present in a particular patient is often coded as a ``presumed negative'' rather than a ``missing value'' \citep{Keller2020ReachingCLABSI}.
Since this type of value differs from a true negative value, indicating actual absence of a complication in a patient, the result is zero-inflated data.
Another prominent example is single-cell RNA sequence data, whose zeros may signify either genuine values (representing, e.g. lack of gene expression) 
or artificial zeros resulting from technical artifacts of experimental protocols or recording equipment \citep{Wagner2016Reveal, Jiang2022scRNAseqdata}.
In all these cases, naive analysis of ZI data that does not distinguish true from artificial zeros can lead to markedly biased conclusions.

Existing approaches for zero inflation focus on observed data likelihood modelling using either hurdle models or zero-inflation models
\citep{neelon2016modelingzero, Greene2005CensoredData}.
Hurdles models are mixtures models
of a distribution truncated at zero
and another distribution modeling the occurrence of $0$ values \citep{mullahy1986specification}.
In genomics applications, \citet{Yu.Drton.ea:2023:DirectedGraphical, Dai.Ng.ea:2023:GeneRegulatory} use graphical models to represent the zero-inflated likelihood for the purposes of causal discovery.
On the other hand, zero inflation models \citep{Lambert1992ZIP,Young2022ZeroinflatedmodelI} assume two sources of zeros, either structural (or inflated) zeros or true zeros due to sampling.
More recent work has extended this type of approach to include
semi-parametric models \citep{arab2012semiparametric, lam2006semiparametric}.
\citet{kleinke2013multipleimputation} apply an augmentation of the chained equations imputation approach to correct the bias introduced by inflated zeros.
\citet{Lukusa.Lee.ea:2017:ReviewZeroInflated} review
methods in settings where inflated zeros co-occur with missing data, however these settings do not include cases considered here, where the excess zeros represent a censored realization.

The disadvantage of the first type of approach is that it does not aim to reconstruct underlying values, which are often of interest.
The disadvantage of the second type of approach is that correctly distinguishing true from inflated zero values relies on assumptions that are unlikely to hold in practice, e.g., strict parametric assumptions. Moreover, these assumptions may not be congenial and not lead to a coherent full data distribution – guaranteeing model misspecification.
This is a more general issue than zero inflation, and occurs in standard missing data problems as well.
In contrast, our approach to modeling inflated zeros has two important features. First, we aim to distinguish true from inflated zeros, and thus identify underlying realizations in the data. Second, we avoid imposing strong parametric assumptions to do so.

Specifically, we propose to model zero inflation using a generalization of missing data models.
In standard missing data, the relationship between an observed variable and its corresponding underlying variable is determined by an \emph{observability indicator}.
If the indicator is $1$, the observed and the underlying variables coincide, while if the indicator is $0$, the observed variable is recorded as a missing value.
In zero inflated problems, we view improperly recorded zero values as missing values denoted by a zero.
Hence, in this view, we cannot tell a zero indicating an actual value from a zero indicating missingness,
and observing a zero means the observability indicator is \emph{itself} unobserved.

This complication implies that even if we assume a missing data model where the full data distribution would have been identified absent zero inflation,
such as the Missing-Completely-At-Random (MCAR) model, we would generally not obtain identification in the presence of zero inflation.
Thus, the variant of the missing data problem we consider is significantly more complicated than standard missing data.

We approach this problem using recent theory of graphical models applied to missing data, which gives general identification results in the absence of zero inflation
\citep{mohan13missing, bhattacharya19mid, bhattacharya20completeness}.
We first note that
zero inflation problems viewed in this framework could be 
arranged in a hierarchy similarly to missing data problems
\citep{rubin76inference}:
Zero-Inflated Missing-Completely-At-Random (ZI MCAR),
Zero-Inflated Missing-At-Random (ZI MAR), and
Zero-Inflated Missing-Not-At-Random (ZI MNAR).

We then show that if zero inflation is present, target parameters involving zero inflated variables are not identified without additional assumptions, even in the relatively simple ZI MCAR model.
We further show that if an informative proxy for a missingness indicator exists, identification of the target parameters becomes possible provided the missing data model (sans zero inflation) is identified, via a modification of the effect restoration approach in \citet{kuroki14measurement}, provided the true proxy-indicator relationship is known.

If this relationship is not known, we show that only certain proxy-indicator relationships are compatible with the overall model which provides a natural sensitivity analysis strategy.
In particular, in the case of a categorical outcome, we provide an analytic bound for the proxy-indicator relationship in the presence of zero inflation in a number of missing data models, and show that in some models our bound is sharp.
In more general cases, we show that the numeric approach for obtaining bounds detailed in \citet{duarte23automated} may be used instead.

Finally, we demonstrate an application of our method on simulated data, as well as a real world dataset on CLABSIs.

\section{Graphical Models of Missing Data and Zero Inflated Data}
\label{sec:prelim}

In this section we briefly review relevant existing works on
missing data, and describe difficulties posed by zero inflation.

\subsection{Missing data and identification}

Let $X^{(1)} = \{X_1^{(1)}, \ldots, X_n^{(1)}\}$ be a set of random variables (r.v.s) of interest.
Denote $\mathcal{X}^{(1)}_i$ as the state space of $X^{(1)}_i$, which we assume is categorical, and without loss of generality, includes the value $0$.
Samples of $X^{(1)}$ are systematically missing, with
true values being replaced by a special symbol ``?''.
To better represent missing data problems, it is convenient to use
two additional sets of r.v.s: the proxies $X = \{X_1, \ldots, X_n\}$, where each proxy $X_i \in X$ has the state space
$\mathcal{X}_i = \mathcal{X}_{i}^{(1)} \cup \{``?"\}$,
and the binary observability indicators $R = \{R_1, \ldots, R_n\}$.
Each proxy $X_i$ is deterministically defined in terms of the underlying variable $X^{(1)}_i$ and the observability indicator $R_i$ via the missing data version of the consistency rule: $X_i=X^{(1)}_i$ when $R_i=1$ and $X_i=``?"$ when $R_i=0$.
Thus, a variable $X_i^{(1)}$ may be described as "$X_i$ had it (hypothetically) been observed", i.e., a counterfactual. The superscript notation is deliberately chosen to make the connection to counterfactuals in causal inference explicit.
In addition to $X^{(1)}, R, X$, let $C$ represents other fully observed variables.

We define $R_{-i}$ as $\{ R_1, \ldots, R_{i-1}, R_{i+1}, \ldots, R_n \}$, $R_{<i}$ as $\{ R_1, \ldots, R_{i-1} \}$ and $R_{\geq i}$ as $\{ R_i \ldots, R_n \}$, with analogous subsets of $X$, $X^{(1)}$ 
defined similarly.
Following the nomenclature in \citet{bhattacharya20completeness, bhattacharya19mid}, we call $p(X^{(1)}, R, C)$ the full law, $p(R, X, C)$ the observed law, and $p(X^{(1)})$ the target law.
A missing data model is a set of distributions over the variables $\{X^{(1)}, R, X, C\}$ that satisfy the above consistency rule.

Following \citet{mohan13missing}, we consider missing data model defined using a class of directed acyclic graphs (DAGs) called missing data DAGs (m-DAGs).
Specifically, an m-DAG $\mathcal{G}(V)$ consists of nodes $V = \{X^{(1)}, R, X, C\}$.
Like all DAGs, m-DAGs only have directed edges and lack directed cycles, but also have a number of additional restrictions:
each proxy $X_i$ has exactly 2 incoming edges $X^{(1)}_i \rightarrow X_i \leftarrow R_i$ (due to consistency); there is no edge from any $X_i$ or $R_i$ to any $X^{(1)}_i$.
A joint $p(X^{(1)}, R, X)$ in the missing data model corresponding to the m-DAG $\mathcal{G}$ factorizes as
{\small
  \begin{align*}
    \prod_{V \in \{R, X^{(1)}\}} p(V \mid \operatorname{pa}_{\mathcal{G}}(V)) \prod_{X_i \in X} p(X_i \mid R_i, X^{(1)}_i)
  \end{align*}
}where all terms $p(X_i \mid R_i, X^{(1)}_i)$ are deterministic.
Using m-DAGs, one can represent many interesting missing data scenarios, see Fig.~\ref{fig:missing_data_example} for examples.

An important goal in missing data problems, prior to statistical inference, is to ensure the target parameter, which is generally some function of the target law, is identified from the observed law.
It follows by definition that the target law $p(X^{(1)})$ is identified if and only if the propensity score $p(R\mid X^{(1)})$ evaluated at $R=1$ is identified, while the full law $p(X^{(1)}, R)$ is identified if and only if the propensity score $p(R \mid X^{(1)})$ at all values of $R$ is identified.
While identification of the target law is still an open problem, \citet{bhattacharya20completeness}
showed a sound and complete method for identification of the full law $p(X^{(1)}, R)$ from the observed law $p(R, X)$ in missing data models represented by m-DAGs and hidden variable m-DAGs.

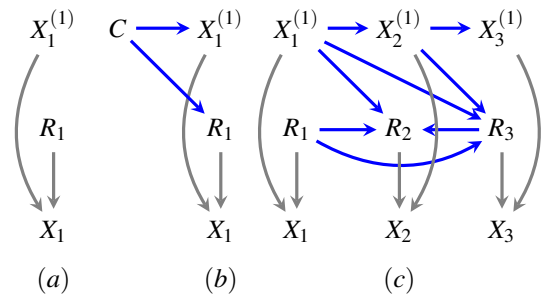
\begin{figure}[!htb]
  \centering
    \begin{tikzpicture}[scale=0.4,>=stealth, node distance=1.35cm]
      \tikzstyle{obs} = [circle, inner sep=1pt]
      \tikzstyle{unobs} = [circle, inner sep=0pt]
      \begin{scope}
        \path[->, very thick]
        node[unobs] (x11) {$X^{(1)}_1$}
        node[obs, below of=x11] (r1) {$R_1$}
        node[obs, below of=r1] (x1) {$X_1$}
  
        (x11) edge[gray, bend right] (x1)
        (r1) edge[gray] (x1)

        node[below of=x1, yshift=0.7cm] (l) {$(a)$}
        ;
      \end{scope}
      \begin{scope}[xshift=5.5cm]
        \path[->, very thick]
        node[unobs] (x11) {$X^{(1)}_1$}
        node[obs, below of=x11] (r1) {$R_1$}
        node[obs, below of=r1] (x1) {$X_1$}
        node[obs, left of=x11] (c) {$C$}
  
        (x11) edge[gray, bend right] (x1)
        (r1) edge[gray] (x1)
        (c) edge[blue] (x11)
        (c) edge[blue] (r1)

        node[below of=x1, yshift=0.7cm] (l) {$(b)$}

        ;
      \end{scope}      
      \begin{scope}[xshift=8.0cm]
        \path[->, very thick]
        node[unobs, ] (x11) {$X^{(1)}_1$}
        node[unobs, right of=x11] (x21) {$X^{(1)}_2$}
        node[unobs, right of=x21] (x31) {$X^{(1)}_3$}
        node[obs, below of=x11] (r1) {$R_1$}
        node[obs, below of=x21] (r2) {$R_2$}
        node[obs, below of=x31] (r3) {$R_3$}
        node[obs, below of=r1] (x1) {$X_1$}
        node[obs, below of=r2] (x2) {$X_2$}
        node[obs, below of=r3] (x3) {$X_3$}

        (x11) edge[blue] (x21)
        (x21) edge[blue] (x31)

        (r1) edge[blue] (r2)
        (r3) edge[blue] (r2)

        (x11) edge[blue] (r3)
        (r1) edge[blue, bend right] (r3)
  
        (x11) edge[gray, bend right] (x1)
        (r1) edge[gray] (x1)
        (x21) edge[gray, bend left] (x2)
        (r2) edge[gray] (x2)
        (x31) edge[gray, bend left] (x3)
        (r3) edge[gray] (x3)

        (x11) edge[blue] (r2)
        (x21) edge[blue] (r3)

        node[below of=x2, yshift=0.7cm] (l) {$(c)$}
        ;
      \end{scope}
    \end{tikzpicture}
  \caption{
    Missing data scenarios represented by m-DAG.
    Circle nodes denote observed variables, while others nodes are unobserved.
    Gray edges denote deterministic nature of $p(X_i \mid R_i, X^{(1)}_i)$ due to consistency.
    (a) $X^{(1)}_1$ is MCAR since $R_1 \Perp X^{(1)}$.
    (b) $X^{(1)}_1$ is MAR since $R_1 \Perp X^{(1)} \mid C$.
    (c) $X^{(1)}_1,X^{(1)}_2,X^{(1)}_3$ are MNAR, since observability indicators $R_1,R_2,R_3$ are are not independent of these variables,
    either marginally or given observed variables.
  }
  \label{fig:missing_data_example}
\end{figure}

\subsection{Zero Inflation Non-identifiability}

A zero inflated (ZI) model associated with an m-DAG is a variant of the missing data model associated with that m-DAG, with the following important difference: 
the missing data consistency relating variables $X^{(1)}_i\in X^{(1)}, X_i\in X, R_i \in R$ is replaced by a zero inflation version,
where $X_i = X^{(1)}_i$ if $R_i = 1$, and $X_i = 0$ if $R_i = 0$.
\footnote{Note that we consider ZI models with categorical state spaces only, unless stated otherwise.}

There are several important consequences of zero inflated consistency.
Firstly, both $X^{(1)}_i \in X^{(1)}$ and $X_i \in X$ take values in $\mathcal{X}_i$, and no variable in a ZI problems takes the value ``?''.
Secondly, as in missing data, the ZI-variable $X^{(1)}_i$ is counterfactual, and according to the ZI consistency rule, its true realizations are observed only when $R_i=1$.
In particular, if $X_i =x \neq 0$, we deduce $R_i=1$ and $X_i^{(1)} = x$.
However, since it is not possible to tell whether a realization $X_i = 0$ corresponds to the situation where $0$ is the true value of $X^{(1)}_i$, or corresponds to a censored realization of $X^{(1)}_i$,
$R_i$ is \emph{unobserved} whenever $X_i = 0$.
Moreover, while we still refer to $p(X^{(1)})$ and $p(X^{(1)}, R)$ as the target law and the full law, respectively, we will refer to
$p(R, X)$ as the zero-inflated law (ZI law), rather than the observed law, since $R$ is not always observed.
Thirdly, the ZI consistency imposes the following important restriction on the ZI law $p(R, X)$
\begin{itemize}[leftmargin=0cm]
  \item[] \textbf{(Z)} For every $i$ and $x \neq 0$, $p(R_{i}=0, X_{i}=x) = 0$.
\end{itemize}

We classify ZI models as ZI MCAR, ZI MAR, or ZI MNAR, if its missing data version is MCAR, MAR, or MNAR, respectively. Examples of ZI models are shown in Fig.~\ref{fig:zi_mcar_mar} and Fig.~\ref{fig:zi_mnar}.

Just as in missing data problems, the goal in ZI problems is to identify (a function of) the target law or the full law from the observed law and possibly additional objects. We focus on the full law identification in this paper.
Unsurprisingly, ZI problems are significantly harder than missing data problems, in the sense that both the target law
and the full law are non-parametrically non-identified even in the simplest setting (ZI MCAR), as shown by the following result.

\begin{lemma}[Non-identifiability]
  \label{lm:nonid}
    Given a ZI model associated with any m-DAG ${\cal G}$, both the target law $p(X^{(1)})$ and the full law $p(X^{(1)}, R, C)$ are non-parametrically non-identified.
\end{lemma}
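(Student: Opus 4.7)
The plan is to prove Lemma 1 by exhibiting, for any m-DAG $\mathcal{G}$, two distinct distributions $p_1$ and $p_2$ in the ZI model associated with $\mathcal{G}$ that induce the same observable law $p(X, C)$ but disagree on the target law $p(X^{(1)})$ (and hence on the full law). Since no function of the observable data can distinguish $p_1$ from $p_2$, neither the target nor the full law is identified.

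The construction I would use collapses each model to a product measure, which is always a legitimate element of any m-DAG model because setting every CPD to be constant in its parents trivially satisfies all implied conditional independencies. Pick any index $i$. In both $p_1$ and $p_2$, fix some common marginals for $C$ and for $X_j^{(1)}, j \neq i$, and set $p_k(R_j = 1) = 1$ for every $j \neq i$, so that by the ZI consistency rule $X_j = X_j^{(1)}$ for $j \neq i$ in both models. Let $p_1$ and $p_2$ differ only in the marginals of the Bernoulli pair $(X_i^{(1)}, R_i)$: choose probabilities $(q_1, \pi_1) \neq (q_2, \pi_2)$ satisfying $q_1 \pi_1 = q_2 \pi_2$, for example $q_1 = \pi_2 = 0.6$ and $q_2 = \pi_1 = 0.5$, with $X_i^{(1)}$ supported on $\{0, 1\} \subseteq \mathcal{X}_i^{(1)}$. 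By ZI consistency, $p_k(X_i = 1) = q_k \pi_k$ and $p_k(X_i = 0) = 1 - q_k \pi_k$, so the constraint $q_1 \pi_1 = q_2 \pi_2$ equates the marginals of $X_i$; independence from the remaining variables then lifts this to $p_1(X, C) = p_2(X, C)$. At the same time, $p_1(X_i^{(1)} = 1) = q_1 \neq q_2 = p_2(X_i^{(1)} = 1)$, so the target and full laws are genuinely different.

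The steps I would carry out in order are: (i) verify that a product-measure joint with constant-in-parents CPDs indeed lies in the ZI model of an arbitrary m-DAG $\mathcal{G}$ (so that the construction is available regardless of how rich $\mathcal{G}$ is); (ii) compute the marginals $p_k(X_i)$ and $p_k(X_j)$ via the ZI consistency rule and confirm $p_1(X, C) = p_2(X, C)$; (iii) display the concrete numerical $(q_k, \pi_k)$ choice witnessing $p_1 \neq p_2$; and (iv) conclude non-identifiability for both the target law $p(X^{(1)})$ and the full law $p(X^{(1)}, R, C)$. The main obstacle is essentially the bookkeeping in steps (i)--(ii), namely ensuring the degenerate product-measure construction is valid for \emph{every} m-DAG regardless of its topology. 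Once that is settled, the argument reduces to the elementary observation that in a single Bernoulli/Bernoulli zero-inflation model the observed fraction of zeros $1 - q \pi$ does not separately identify $q$ and $\pi$, which is the core obstruction being lifted to general m-DAGs here.
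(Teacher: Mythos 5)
Your proposal is correct and follows essentially the same route as the paper's own proof: reduce to the fully independent (product-measure) submodel, which is Markov to any m-DAG, and observe that the observed zero fraction only pins down the product $q\pi = p(X_i^{(1)}=1)\,p(R_i=1)$, so two factorizations of the same product give identical observed laws but different target and full laws. The paper's version parameterizes the two witnesses by a scalar $m$ with $p_2(X^{(1)}=1)=\tfrac{1}{m}p_1(X^{(1)}=1)$ and $p_2(R=1)=m\,p_1(R=1)$ rather than a concrete numerical pair, but this is the same argument.
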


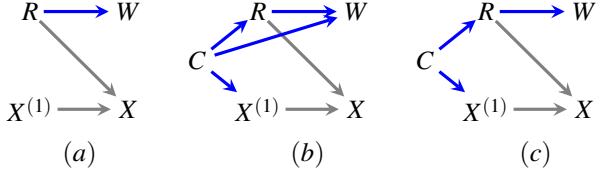
\begin{figure}[!htb]
  \centering
    \begin{tikzpicture}[>=stealth, node distance=1.3cm]
      \tikzstyle{obs} = [circle, inner sep=1pt]
      \tikzstyle{unobs} = [circle, inner sep=0pt]
      \begin{scope}[xshift=0cm]
        \path[->, very thick]
        node[unobs] (x1) {$X^{(1)}$}
        node[obs, right of=x1] (x) {$X$}
        node[unobs, above of=x1] (r) {$R$}
        node[obs, right of=r] (w) {$W$}

        (x1) edge[gray] (x)
        (r) edge[gray] (x)
        (r) edge[blue] (w)

        node[below of=x1, yshift=+0.7cm, xshift=+0.65cm] (l) {$(a)$}
        ;
      \end{scope}
      \begin{scope}[xshift=3cm]
        \path[->, very thick]
        node[unobs] (x1) {$X^{(1)}$}
        node[obs, right of=x1] (x) {$X$}
        node[unobs, above of=x1] (r) {$R$}
        node[obs, below of=r, xshift=-0.8cm, yshift=+0.65cm] (c) {$C$}
        node[obs, right of=r] (w) {$W$}

        (x1) edge[gray] (x)
        (r) edge[gray] (x)
        (r) edge[blue] (w)
        (c) edge[blue] (r)
        (c) edge[blue] (x1)
        (c) edge[blue] (w)

        node[below of=x1, yshift=+0.7cm, xshift=+0.65cm] (l) {$(b)$}
        ;
      \end{scope}
      \begin{scope}[xshift=6cm]
        \path[->, very thick]
        node[unobs] (x1) {$X^{(1)}$}
        node[obs, right of=x1] (x) {$X$}
        node[unobs, above of=x1] (r) {$R$}
        node[obs, below of=r, xshift=-0.8cm, yshift=+0.65cm] (c) {$C$}
        node[obs, right of=r] (w) {$W$}

        (x1) edge[gray] (x)
        (r) edge[gray] (x)
        (r) edge[blue] (w)
        (c) edge[blue] (r)
        (c) edge[blue] (x1)
        node[below of=x1, yshift=+0.7cm, xshift=+0.65cm] (l) {$(c)$}   
        ;
      \end{scope}      
    \end{tikzpicture}
    \caption{
    Examples of proxy-augmented ZI MCAR model (a) and ZI MAR models (b and c).
    \textbf{A1}, \textbf{A2} holds in (a),
    {\bf A1$^\dag$}, {\bf A2$^{\dag}$} hold in (b), and
    {\bf A1$^*$}, {\bf A2$^*$} hold in (c).
    Unlike missing data, indicator $R$ is partially observed.
    }
    \label{fig:zi_mcar_mar}
\end{figure}

\begin{figure}[!htb]
  \centering
    \begin{tikzpicture}[>=stealth, node distance=1.3cm]
      \tikzstyle{obs} = [circle, inner sep=0.5pt]
      \tikzstyle{unobs} = [circle, inner sep=0.5pt]
      \begin{scope}
        \path[->, very thick]
        node[unobs] (x11) {$X^{(1)}_1$}
        node[unobs, right of=x11] (x21) {$X^{(1)}_2$}
        node[unobs, below of=x11] (r1) {$R_1$}
        node[unobs, below of=x21] (r2) {$R_2$}
        node[obs,   above of=r1, xshift=-0.9546cm] (x1) {$X_1$}
        node[obs,   above of=r2, xshift=+0.9546cm] (x2) {$X_2$}
        node[obs,   below right of=r2, yshift=0.9546cm] (w2) {$W_2$}
        node[obs,   below left of=r1, yshift=0.9546cm] (w1) {$W_1$}

        (x11) edge[blue] (x21)

        (r1) edge[blue] (w1)
        (r2) edge[blue] (w2)
        
        (x11) edge[blue] (r2)
        (x21) edge[blue] (r1)
        
        (x11) edge[gray, bend right=0] (x1)
        (r1) edge[gray] (x1)
        (x21) edge[gray, bend left=0] (x2)
        (r2) edge[gray] (x2)

        node[below of=r1, xshift=+0.65cm, yshift=+0.7cm] (l) {$(a)$}
        ;
      \end{scope}
      \begin{scope}[xshift=4.2cm]
        \path[->, very thick]
        node[unobs] (x11) {$X^{(1)}_1$}
        node[unobs, right of=x11] (x21) {$X^{(1)}_2$}
        node[unobs, below of=x11] (r1) {$R_1$}
        node[unobs, below of=x21] (r2) {$R_2$}
        node[obs, above of=r1, xshift=-0.9546cm] (x1) {$X_1$}
        node[obs, above of=r2, xshift=+0.9546cm] (x2) {$X_2$}
        node[obs, below right of=r2, yshift=0.9546cm] (w2) {$W_2$}
        node[obs, below left of=r1, yshift=0.9546cm] (w1) {$W_1$}

        (x11) edge[blue] (x21)

        (r1) edge[blue] (r2)
        (r1) edge[blue] (w1)
        (r2) edge[blue] (w2)
  
        (x11) edge[blue] (r2)
        
        (x11) edge[gray, bend right=0] (x1)
        (r1) edge[gray] (x1)
        (x21) edge[gray, bend left=0] (x2)
        (r2) edge[gray] (x2)

        node[below of=r1, xshift=+0.65cm, yshift=+0.7cm] (l) {$(b)$}
        ;
      \end{scope}
    \end{tikzpicture}
  \caption{
    Examples of proxy-augmented ZI MNAR models.
      (a) ZI bivariate block-parallel model. (b) ZI bivariate block-sequencial MAR model.
  }
  \label{fig:zi_mnar}
\end{figure}
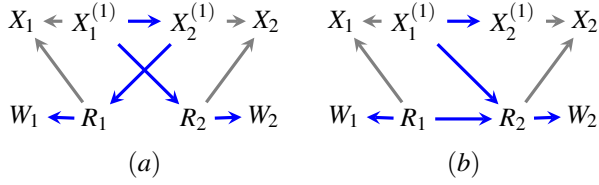

\section{Proxy-based Identification}

We first demonstrate our approach to proxy-based identification with the
simplest ZI missing data model, ZI MCAR, and generalize it to arbitrary
ZI m-DAG.

\subsection{Identification in the ZI MCAR Model}

Lemma~\ref{lm:nonid} implies that any identification method must rely on additional assumptions beyond those implied by the m-DAG.
To illustrate additional assumptions that will be employed, consider 
a simple ZI MCAR model with a single ZI variable $X^{(1)}$ taking values in $\{0, \ldots, K\}$, and the corresponding inflation indicator $R$, where $R \ci X^{(1)}$.

To simplify subsequent presentation, we will use the following notational shorthand: $p_{a_i \mid b_j}$ to mean $p(A=i \mid B=j)$, and $\mathbf{p}_{A \mid B}$ to mean the stochastic matrix whose elements are $p_{a_i \mid b_j}$.  Similarly for $p_{a_i, b_j}$ and $p_{A, B}$.
We also use a matrix multiplication shorthand, where $p_{a_i \mid b_j} p_{b_j \mid c_k} = p_{a_i \mid c_k}$ is taken to mean $\sum_{j} p(A=i | B=j) p(B=j | C=k) = p(A = i | C=k)$.

We will assume the existence of an observed binary proxy variable $W$ informative for $R$ with the following properties:
\begin{itemize}[leftmargin=0cm]
  \item[] \textbf{(A1)} $W \ci X^{(1)} \mid R$,
  \item[] \textbf{(A2)} The matrix $\mathbf{p}_{W \mid R}$ is invertible.
\end{itemize}
Note that since $W$ and $R$ are binary, \textbf{A2} is equivalent to $p_{w_0 \mid r_0} \neq p_{w_0 \mid r_1}$.
Due to the existence of the proxy variable $W$, we call this ZI MCAR model "proxy-augmented", whose graph is shown in Fig.~\ref{fig:zi_mcar_mar} (a).

\citet{kuroki14measurement} considered assumptions {\bf A1} and {\bf A2} in the context of obtaining identification of causal effects in the presence of unobserved confounding.
In that work, the proxy variable $W$ was related to an unobserved categorical variable which was a common cause of the treatment and outcome variables.

In this paper, we adopt the method of \citet{kuroki14measurement}
to express the ZI law $p(R, X)$ in terms of the observed law $p(X, W)$ and the conditional distribution $p(W \mid R)$.
In addition to \textbf{A1} and \textbf{A2}, the Kuroki-Pearl method requires that the observed law $p(X, W)$ and $p(W \mid R)$ are from the same full law (e.g. compatible), and $p(W \mid R)$ is known.

To see that point identification is then possible, we write $p(wx) = \sum_r p(w \mid r) p(r,x)$ in matrix form
{\small
\begin{equation}
\setlength\arraycolsep{2pt}
\begin{aligned}
\underbrace{\begin{pmatrix} p_{w_0, x_0} & \cdots & p_{w_0, x_K} \\ p_{w_1, x_0} & \cdots & p_{w_1, x_K} \end{pmatrix}}_{\mathbf{p}_{WX}}
&= \underbrace{\begin{pmatrix} p_{w_0 \mid r_0} & p_{w_0 \mid r_1} \\ p_{w_1 \mid r_0} & p_{w_1 \mid r_1} \end{pmatrix}}_{\mathbf{p}_{W \mid R}}
\:
\underbrace{\begin{pmatrix} p_{r_0,  x_0} & \cdots & 0\\ p_{r_1, x_0} & \cdots & p_{r_1,  x_K} \end{pmatrix}}_{\mathbf{p}_{RX}},
\end{aligned}
\end{equation}
}where the $0$ entry in $\mathbf{p}_{RX}$ is due to the restriction {\bf Z}.
Since $\mathbf{p}_{W \mid R}$ is invertible, we can solve for $\mathbf{p}_{RX}$ by $[\mathbf{p}_{W \mid R}]^{-1} \mathbf{p}_{WX}$,
leading to the following result.
\begin{theorem}[ZI law restoration in ZI MCAR]
  \label{thm:rest-zcar}
  For the ZI MCAR model in Fig.~\ref{fig:zi_mcar_mar} (a) under {\bf A1}, {\bf A2},
  the ZI law $p(R, X,W)$ is point identified given the observed law $p(X, W)$ and
  a compatible proxy-indicator conditional distribution $p(W \mid R)$,
  as follows
  \begin{equation}
    \label{eq:kp_id}
    p(r,x,w) = p(w \mid r) \left[\mathbf{p}_{W \mid R}^{-1} \mathbf{p}_{WX}\right]_{r,x}.
  \end{equation}
\end{theorem}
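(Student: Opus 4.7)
The plan is to decompose $p(r, x, w)$ into a product of $p(w \mid r)$ and $p(r, x)$, and then to identify $p(r, x)$ by inverting the linear system that relates it to the observable joint $p(w, x)$.

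First I would factor $p(r, x, w) = p(w \mid r, x)\, p(r, x)$ and argue that $p(w \mid r, x) = p(w \mid r)$. Assumption \textbf{A1} gives $W \Perp X^{(1)} \mid R$, and the ZI consistency rule makes $X$ a deterministic function of $(R, X^{(1)})$ (equal to $X^{(1)}$ when $R=1$ and to $0$ when $R=0$). Therefore conditioning on $(R, X)$ is no more informative about $W$ than conditioning on $(R, X^{(1)})$, and conditional on $R$ alone the proxy $W$ is independent of any function of $X^{(1)}$. This yields $p(r, x, w) = p(w \mid r)\, p(r, x)$, which pins down the $p(w \mid r)$ factor in the claimed formula.

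Next I would identify $p(r, x)$ from the observed law. Marginalizing $W$ out via the same conditional independence,
\begin{equation*}
p(w, x) \;=\; \sum_{r} p(w \mid r)\, p(r, x),
\end{equation*}
which in the matrix notation of the paper is exactly $\mathbf{p}_{WX} = \mathbf{p}_{W \mid R}\, \mathbf{p}_{RX}$. By assumption \textbf{A2} the matrix $\mathbf{p}_{W \mid R}$ is invertible, so
\begin{equation*}
\mathbf{p}_{RX} \;=\; \mathbf{p}_{W \mid R}^{-1}\, \mathbf{p}_{WX}.
\end{equation*}
Reading off entries gives $p(r, x) = \bigl[\mathbf{p}_{W \mid R}^{-1} \mathbf{p}_{WX}\bigr]_{r, x}$, and combining with the first step produces the claimed identifying expression \eqref{eq:kp_id}. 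Note that the restriction \textbf{Z} is not needed to produce the formula, but is automatically consistent with it: if $x \neq 0$, the entry $[\mathbf{p}_{W\mid R}^{-1}\mathbf{p}_{WX}]_{r_0, x}$ must equal $0$ for the two factorizations to agree.

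The main obstacle I expect is the step showing $p(w \mid r, x) = p(w \mid r)$, since it mixes the probabilistic independence from \textbf{A1} (which is stated in terms of the counterfactual $X^{(1)}$) with the deterministic ZI consistency relating $X$, $X^{(1)}$, and $R$. Once that reduction is done carefully, the rest is a direct matrix inversion that uses \textbf{A2}, and compatibility of $p(W \mid R)$ with $p(W, X)$ ensures the resulting $\mathbf{p}_{RX}$ has nonnegative entries summing appropriately, so the identified expression is a valid probability.
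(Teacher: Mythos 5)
Your proposal is correct and follows essentially the same route as the paper: write $p(w,x)=\sum_r p(w\mid r)p(r,x)$ in matrix form as $\mathbf{p}_{WX}=\mathbf{p}_{W\mid R}\mathbf{p}_{RX}$, invert using \textbf{A2}, and attach the factor $p(w\mid r)$ via the conditional independence $W\Perp X\mid R$. Your explicit justification that $W\Perp X\mid R$ follows from \textbf{A1} together with the determinism of $X$ given $(R,X^{(1)})$ is a detail the paper leaves implicit, and your remark that \textbf{Z} constrains rather than produces the formula matches the paper's treatment.
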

After the ZI law $p(R, X,W)$ is identified, the full law is identified,
$p(X^{(1)}, R, W) = p(X, W \mid R=1) p(R)$, by standard assumptions of the MCAR model.

\begin{remark}
There are two difficulties with this result.  First, since $R$ is potentially unobserved, it is not always reasonable to specify the true $p(W \mid R)$ in applications.  Second, since our working model corresponds to a hidden variable DAG, the model imposes restrictions on
the pair $\left(p(X, W), p(W \mid R)\right)$, meaning that not every potential distribution $p(W \mid R)$ would be consistent with the observed data law under our model.
Using inconsistent $p(W \mid R)$ in the matrix inversion equation places us outside the model, and can yield inconsistent results, such as invalid negative probabilities $p(R, X)$.
Examples of such an inconsistency is provided in the Appendix.
\citet{kuroki14measurement} noted the latter issue in the context of causal inference, but did not provide bounds.
\end{remark}

\subsection{Proxy-based identification in general ZI missing data models}

In this section, we generalize our previous proxy-based approach to an
arbitrary graphical ZI model corresponding to an m-DAG, given that the full
law is point identified in the missing data model associated to that m-DAG.

Consider any ZI model associated with an arbitrary m-DAG,
with a set of fully observed covariates $C$, a set of ZI variables $X^{(1)} = \{ X^{(1)}_1, \ldots, X^{(1)}_n \}$,
inflation indicators $R = \{ R_1, \ldots, R_n \}$, observed versions $X = \{ X_1, \ldots, X_n \}$ for variables in $X^{(1)}$, and proxies $W = \{ W_1, \ldots, W_n \}$ for variables in $R$.

We make the following assumptions which generalize {\bf A1} and {\bf A2}:

\begin{itemize}[leftmargin=0cm]
  \item[] \textbf{(A1$^*$)} $\forall i$, $W_i \ci X^{(1)},C,{R_{-i}} \mid R_i$.
  \item[] \textbf{(A2$^*$)} The matrix $\mathbf{p}_{W \mid R}$ is invertible.
\end{itemize}
In addition, we will provide alternatives to {\bf A1$^*$} and {\bf A2$^*$} which allow the proxies $W$ to potentially depend on $C$:
\begin{itemize}[leftmargin=0cm]
    \item[] \textbf{(A1$^{\dag}$)} $\forall i$, $W_i \ci X^{(1)},R_{-i} \mid C,R_i$.
    \item[] \textbf{(A2$^{\dag}$)} The matrix $\mathbf{p}_{W \mid R,c}$ is invertible for every value $c$.
\end{itemize}

The identification strategy we adopt proceeds in two stages:
\begin{enumerate}
  \item \textbf{ZI law restoration}:
  point identify (if true $p(W \mid R)$ is known) or partially identify the ZI law $p(R, X,W,C)$ from the observed law $p(X, W, C)$. 
  \item \textbf{Downstream identification}: identify the full law $p(X^{(1)}, R, W, C)$ from the ZI law $p(R, X,W,C)$.
\end{enumerate}
Since every $R_i$ is unobserved whenever $X_i = 0$ in ZI problems, the purpose of the first stage is to recover the ZI law involving $R$ and other observed variables.
Under mentioned proxy assumptions and knowledge of the true $p(W \mid R)$, point identification of this law is possible. Otherwise, partial identification bounds are computed.
If point or partial identification is possible,
variables in $R$ may now be treated as observed data, and the problem is reduced to classical identification in missing data model.  In particular, we adopt the sound and complete identification procedure described by \citet{bhattacharya20completeness} to point identify the full law in the second stage.

While we focus on non-parametric point identification results for the full law, one could instead employ any point or partial identification procedure developed for missing data problems for the second stage.
We leave these types of extensions to future work.

\subsubsection{ZI law restoration}
\label{subsec:zi_law_rest}

Under the proxy assumptions,
we have the following identification result, which generalizes Theorem~\ref{thm:rest-zcar}.

\begin{theorem}[ZI law restoration]
  \label{thm:rest-general}
  Given a ZI model satisfying assumptions {\bf A1$^*$} and {\bf A2$^*$} (or {\bf A1$^{\dag}$} and {\bf A2$^{\dag}$}),
  the ZI law $p(R, X,W,C)$ is point identified given the observed law $p(X, W, C)$ and a compatible proxy-indicator conditional distribution $p(W \mid R)$ (OR $p(W \mid R,C)$),
  \begin{equation}
    \label{eq:kp_id_general}
  \begin{aligned}
    & p(r,x,w,c) = p(w \mid r)\!\! \left[\mathbf{p}_{W \mid R}^{-1} \mathbf{p}_{WXC}\right]_{r,x,c}
    \!\!\!\!\!
    \!\!\!\!\!
    \text{ under {\bf A1$^*$},{\bf A2$^*$}} \\
    & p(r,x,w,c) = p(w \mid r,c)\!\! \left[\mathbf{p}_{W \mid R,C}^{-1} \mathbf{p}_{WXC}\right]_{r,x,c}
    \!\!\!\!\!
    \!\!\!\!\!
    \text{ under {\bf A1$^{\dag}$},{\bf A2$^{\dag}$}}.
  \end{aligned}
  \end{equation}
\end{theorem}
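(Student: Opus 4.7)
The plan is to mimic the argument used for Theorem~\ref{thm:rest-zcar}, replacing the single indicator/proxy pair $(R,W)$ with the vectors $R=(R_1,\dots,R_n)$ and $W=(W_1,\dots,W_n)$, and carrying the covariate $C$ along as either a marginal factor (under {\bf A1$^*$}) or a conditioning variable (under {\bf A1$^{\dag}$}). The ZI restriction {\bf Z} still pins down the zero pattern of $\mathbf{p}_{RXC}$, but it never enters the argument for inverting $\mathbf{p}_{W\mid R}$: the restoration only needs invertibility of the stochastic matrix mapping $R$ to $W$.

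First I will use the proxy independence assumption to collapse $p(W\mid R,X^{(1)},C)$ to something that only depends on $R$ (or on $R$ and $C$). Under {\bf A1$^*$}, the joint independence $W_i \ci X^{(1)},C,R_{-i}\mid R_i$ implies $p(W\mid R,X^{(1)},C)=\prod_i p(W_i\mid R_i) = p(W\mid R)$, and since $X$ is a deterministic function of $(R,X^{(1)})$ via ZI consistency, this also gives $p(W\mid R,X,C)=p(W\mid R)$. Marginalizing over $R$ yields
\begin{equation*}
p(w,x,c)=\sum_{r} p(w\mid r)\, p(r,x,c),
\end{equation*}
which in matrix form (indexed by the joint values of $W$ and $R$ on one side and $X,C$ on the other) reads $\mathbf{p}_{WXC}=\mathbf{p}_{W\mid R}\,\mathbf{p}_{RXC}$. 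Assumption {\bf A2$^*$} lets me left-multiply by $\mathbf{p}_{W\mid R}^{-1}$ to recover $\mathbf{p}_{RXC}$, and then the full ZI law is reconstructed as $p(r,x,w,c)=p(w\mid r,x,c)\,p(r,x,c)=p(w\mid r)\,[\mathbf{p}_{W\mid R}^{-1}\mathbf{p}_{WXC}]_{r,x,c}$, which is exactly the first displayed formula.

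For the {\bf A1$^{\dag}$},{\bf A2$^{\dag}$} case I would repeat the argument conditionally on $C=c$. The independence $W_i\ci X^{(1)},R_{-i}\mid C,R_i$ implies $p(W\mid R,X,C=c)=p(W\mid R,C=c)$, so for each fixed $c$
\begin{equation*}
p(w,x\mid c)=\sum_{r} p(w\mid r,c)\, p(r,x\mid c),
\end{equation*}
i.e.\ $\mathbf{p}_{WX\mid c}=\mathbf{p}_{W\mid R,c}\,\mathbf{p}_{RX\mid c}$. Invertibility of $\mathbf{p}_{W\mid R,c}$ at every $c$ gives $\mathbf{p}_{RX\mid c}=\mathbf{p}_{W\mid R,c}^{-1}\mathbf{p}_{WX\mid c}$, and multiplying by $p(c)$ and then by $p(w\mid r,c)$ yields the second displayed formula. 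Throughout I would note that the observed law $p(x,w,c)$ (together with the assumed $p(W\mid R)$ or $p(W\mid R,C)$) supplies every quantity on the right-hand side, so the construction is genuinely an identification formula.

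The only real subtlety is a bookkeeping one: the reader must be convinced that ZI consistency does not break the derivation even though $\mathbf{p}_{RXC}$ has structural zeros at coordinates with $R_i=0, X_i\neq 0$. I would therefore add a brief remark that the matrix identity $\mathbf{p}_{WXC}=\mathbf{p}_{W\mid R}\,\mathbf{p}_{RXC}$ holds as a pure probability-calculus identity in any joint distribution satisfying the appropriate conditional independence, and the zero structure imposed by {\bf Z} is automatically preserved by the recovered $\mathbf{p}_{RXC}$ when the inputs are compatible—so no contradiction arises. The main obstacle is really expository: setting up the matrix indexing so that the multi-indicator, multi-proxy case reduces cleanly to the same two-line inversion used in Theorem~\ref{thm:rest-zcar}.
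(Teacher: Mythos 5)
Your proposal is correct and takes essentially the same route as the paper, which offers no separate appendix proof of Theorem~\ref{thm:rest-general}: it is obtained exactly as Theorem~\ref{thm:rest-zcar}, by writing $\mathbf{p}_{WXC}=\mathbf{p}_{W\mid R}\,\mathbf{p}_{RXC}$ (or its $C$-conditional analogue) from the proxy independence and inverting via \textbf{A2$^*$}/\textbf{A2$^\dag$}. The only point worth tightening is that the factorization $p(W\mid R,X^{(1)},C)=\prod_i p(W_i\mid R_i)$ does not follow from the literal componentwise statement of \textbf{A1$^*$} alone (which omits $W_{-i}$) but from the m-DAG structure in which each $W_i$ has $R_i$ as its sole parent, so the $W_i$ are also mutually independent given $R$.
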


Theorem~\ref{thm:rest-general} suffers from the same issue as Theorem~\ref{thm:rest-zcar}: it is unlikely that the true distribution $p(W \mid R)$ (or $p(W \mid R,C)$) will always be available, and given a candidate distribution, it is not obvious to verify that it is compatible with the model and the observed law.

If the true $p(W \mid R)$ (or $p(W \mid R,C)$) is not given, we must find the set of compatible $p(W \mid R)$ (or $p(W \mid R,C)$) distributions to the model and the observed law.
In general, bounds on $p(W \mid R)$ (or $p(W \mid R,C)$) may be computed numerically by encoding the model as a system of polynomial equations and finding extrema of this system using polynomial programming. A method for solving such systems of equations using a primal/dual method is described in \citet{duarte23automated}.
These bounds lead to a natural sensitivity analysis strategy according to our two stage approach. Particularly, each compatible $p(W \mid R)$ in the bounds implies a valid ZI law by Theorem~\ref{thm:rest-general},
which in turn implies a full law by Proposition~\ref{prop:full-law-id} of the next section.
In Section~\ref{sec:experiments}, we conduct a grid search of the compatible set to illustrate this point.

While numeric bound computation is a general approach, finding such bounds is computationally challenging due to the need to solve polynomial programs.  Fortunately, we show that
in certain ZI models, it is possible to derive analytic bounds on $p(W \mid R)$ (or $p(W \mid R,C)$), instead. 
We also show that these bounds are sharp in some cases.

\subsubsection{Downstream identification}

After the ZI law $p(R, X,W,C)$ is recovered in the restoration step,
one may consider this law as the "observed law" in the missing data problem corresponding to the same m-DAG, and invoke missing data identification to obtain the full law $p(X^{(1)},R,W,C)$.
We note that this second identification stage is not precisely the same as that for standard missing data problems, because identification relies on consistency, and consistency under ZI differs from missing data consistency whenever $R=0$.

Fortunately, consistency when $R=1$ coincides in ZI problems and missing data problems, and, as the following result shows, suffices for identification.

\begin{proposition}[ZI full law identification]
The full law $p\left(X^{(1)}, R, W, C\right)$ exhibiting zero inflation that is Markov relative to an m-DAG $\mathcal{G}$ is identified given the ZI law $p(R, X, W, C)$ if and only if $\mathcal{G}$ does not contain edges of the form $X_i^{(1)} \rightarrow R_i$ (no self-censoring) and structures of the form $X_j^{(1)} \rightarrow R_i \leftarrow R_j$ (no colluders), and the positivity assumption holds.
Moreover, the identifying functional for the full law coincides with the functional given in \citet{malinsky2021semiparametric}.
\label{prop:full-law-id}
\end{proposition}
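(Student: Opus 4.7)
The plan is to reduce the zero-inflated identification problem to a standard missing data identification problem and then invoke the identifying functional of \citet{malinsky2021semiparametric}. The key observation is that the restored ZI law $p(R, X, W, C)$ encodes the same information as the observed law in a standard missing data problem on the same m-DAG augmented by $W$. Both laws share the consistency rule $X_i = X_i^{(1)}$ whenever $R_i = 1$; the only difference is how ``no information'' about $X_i^{(1)}$ is encoded when $R_i = 0$ (as $X_i = 0$ in the ZI setting versus $X_i$ recorded as a missing value in standard missing data). Since $R$ is treated as observed after restoration, these two encodings are in bijection, and any identifying functional for the full law that only references conditioning events with the relevant $R_i = 1$ transfers between the two settings unchanged.

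For the sufficiency direction, I would apply the Malinsky et al.\ functional directly to the ZI law. Their functional identifies the full law via an odds-ratio parameterization of the propensity score $p(R \mid X^{(1)}, W, C)$, in which each conditional $p(R_i \mid \pa_{\mathcal{G}}(R_i))$ is recovered by conditioning on patterns where the relevant observability indicators equal one. Under the exclusion of self-censoring and colluders, every counterfactual term appearing in their functional reduces to a quantity of the form $p(X_i^{(1)} = x \mid R_i = 1, \cdot)$, which by ZI consistency equals $p(X_i = x \mid R_i = 1, \cdot)$ and is therefore directly available from the restored ZI law. The positivity assumption ensures that all conditioning events in the functional have positive probability, so the functional is well-defined.

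For the necessity direction, I would invoke the standard nonidentification arguments for self-censoring and colluders given in \citet{bhattacharya20completeness, malinsky2021semiparametric}: those constructions produce two distinct full laws that induce the same standard missing data observed law, and via the bijection above these same full laws induce the same ZI law, ruling out identification of the full law from the ZI law whenever either obstruction is present.

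The main obstacle is a structural verification rather than a deep technical step: one must confirm that every counterfactual $X_i^{(1)}$ appearing in the Malinsky et al.\ functional is conditioned on the corresponding $R_i = 1$, so that ZI consistency can substitute the observed $X_i$ for $X_i^{(1)}$. This is precisely the feature of the functional that characterizes identifiability in the no-self-censoring/no-colluder class, so the translation to the ZI setting reduces to a direct substitution and the identifying functional coincides with the one in \citet{malinsky2021semiparametric}.
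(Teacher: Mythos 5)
Your proposal is correct and takes essentially the same route as the paper: the paper's proof likewise writes the propensity $p(R \mid X^{(1)}, C, W)$ in the odds-ratio factorization of \citet{bhattacharya20completeness}/\citet{malinsky2021semiparametric} and checks term by term (conditional factors via the no-colluder condition, and the higher-order interaction/odds-ratio terms by their functional form) that only the $R=1$ case of consistency is ever invoked, which is exactly the "structural verification" you identify as the crux. Your explicit remark that the necessity direction transfers because the ZI law is in bijection with the standard missing-data observed law once $R$ is treated as observed is a point the paper leaves implicit, but it is the same argument.
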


\subsection{Partial Identification in ZI MCAR}
\label{subsec:pid_zi_mcar}

In this subsection, we relax the requirement that the true $p(W \mid R)$ must be given in the ZI law restoration step, and provide bounds for this conditional distribution in the proxy-augmented ZI MCAR model.

Consider the proxy-augmented ZI MCAR model in Fig.~\ref{fig:zi_mcar_mar} (a).
This model is equivalently described by the following model $\mathcal{P}$,
satisfying {\bf Z}, \textbf{A1}, and \textbf{A2}:
\begin{equation}
  \label{eq:model}
  \begin{aligned}
    \mathcal{P} = \left\{
      \begin{aligned}
      (\mathbf{q}_{W \mid R}, \mathbf{q}_{RX}):
      &&& \!\textstyle \mathbf{q}_{W \mid R} \geq 0, \sum_w q_{w \mid r} = 1, \forall r, \\
      &&& \!\textstyle \mathbf{q}_{RX} \geq 0, \sum_{rx} q_{rx} = 1, \\
      &&& \!\textstyle \forall x \neq 0 (q_{r_0 x} = 0); \: q_{w_0 \mid r_0} \neq q_{w_0 \mid r_1}. \\
      \end{aligned}
    \right\}
  \end{aligned}
\end{equation}
Given an observed law $p(X, W)$, we are interested in the following subset
$\mathcal{Q} \subseteq \mathcal{P}$ of distributions yielding the
observed law,
\begin{equation}
  \label{eq:compatible_set}
    \mathcal{Q} =
    \left\{
        (\mathbf{q}_{W \mid R}, \mathbf{q}_{RX}) \in \mathcal{P} : \:
        \mathbf{q}_{W \mid R} \mathbf{q}_{RX} = \mathbf{p}_{WX}
    \right\}.
\end{equation}

In particular, our goal is finding all $\mathbf{q}_{W \mid R} \in \mathcal{Q}$, which is the \textbf{partial identification} of $q(W \mid R)$ w.r.t. the given observed law.
This is equivalent to projecting $\mathcal{Q}$ onto the probability simplex of $\mathbf{q}_{W \mid R}$.
From (\ref{eq:compatible_set}) and (\ref{eq:model}), one way to check whether an invertible $\mathbf{q}_{W \mid R} \in \mathcal{Q}$
is to compute $\mathbf{q}_{RX} = (\mathbf{q}_{W \mid R})^{-1} \mathbf{p}_{WX}$ and
check $(\mathbf{q}_{W \mid R}, \mathbf{q}_{RX}) \in \mathcal{Q}$.
First, $\mathbf{q}_{RX}$ must be a stochastic matrix for any problem under \textbf{A1} and \textbf{A2}. Second, $\mathbf{q}_{RX}$ must also satisfy ZI-consistency constraint \textbf{Z}.
If these conditions are true, there is a joint distribution in the model generates both $\mathbf{q}_{W \mid R}$ and $\mathbf{p}_{WX}$, and they are said to be \textbf{compatible}.
After the compatible set of $\mathbf{q}_{W \mid R}$ is derived, the partial identification of $\mathbf{q}_{RX}$ could be obtained using
(\ref{eq:kp_id}).

We note that \textbf{Z}
implies, for all $x \neq 0$, $q_{x} = q_{r_1, x}$, so $q_{r_1 \mid x} = 1$.
Then by considering $q_{w_0 \mid r_0} q_{r_0, x} + q_{w_0 \mid r_1} q_{r_1, x} = p_{w_0, x}$,
we obtain point identification $q_{w_0 \mid r_1} = p_{w_0 \mid x_1}$
and the marginal constraints $\forall x \neq 0, p_{w_0 \mid x} = p_{w_0 \mid x_1}$.  Note that these constraints may be used to design a falsification test of the model.

However, $q_{w_0 \mid r_0}$ is not identified, and its bounds must be obtained by solving the following polynomial program:
\begin{equation}
  \label{eq:optim_mcar}
  \begin{aligned}
    \max_{ q_{w_0 \mid r_0}} &&& \pm q_{w_0 \mid r_0}
    \\
    \text{s.t.}
    &&& \mathbf{q}_{W \mid R} \mathbf{q}_{RX} = \mathbf{p}_{WX}, \\
    &&& \textstyle \mathbf{q}_{W \mid R} \geq 0, \: \forall r (\sum_{w} q_{w|r} = 1), \: q_{w_0 \mid r_0} \neq q_{w_0 \mid r_1}, \\
    &&& \textstyle \mathbf{q}_{RX} \geq 0, \: \sum_{rx} q_{rx} = 1, \: q_{w_0 \mid r_1} = p_{w_0 \mid x_1}. \\
  \end{aligned}
\end{equation}

Since both $q_{w_0 | r_0}$ and $ \mathbf{q}_{RX}$ are unknowns,
the above system of equations corresponds to a quadratic program, which is difficult to solve in general.

However, it is possible to transform this optimization into an equivalent linear program with the following observations:
\begin{enumerate}
  \item A specific solution to $\mathbf{q}_{RX}$ is not required. One merely needs to check if $\mathbf{q}_{W \mid R}^{-1} \mathbf{p}_{WX}$ is a stochastic matrix.
  \item If $\mathbf{q}_{W \mid R} \mathbf{q}_{RX} = \mathbf{p}_{WX}$, where all matrices are non-negative, $\mathbf{p}_{WX}$ sum to 1 and $\mathbf{q}_{W \mid R}$ sum to 1, then $\mathbf{q}_{RX}$ sum to 1. The proof of this fact is in the Appendix.
  \item  The inverse $\left[\mathbf{q}_{W \mid R}\right]^{-1}$ is
    \begin{equation}
        \textstyle
        \frac{1}{q_{w_0 \mid r_0} - q_{w_0 \mid r_1}} \begin{pmatrix} 1-q_{w_0 \mid r_1} & - q_{w_0 \mid r_1} \\ q_{w_0 \mid r_0}-1 & q_{w_0 \mid r_0} \end{pmatrix}
        \label{eqn:inverse}
    \end{equation}
\end{enumerate}

Observations 1 and 2 imply that checking compatibility
involves only checking non-negativity of $\mathbf{q}_{W \mid R}^{-1} \mathbf{p}_{WX}$, reducing the unknowns in our optimization problem to only $q_{w_0 \mid r_0}$.
Checking $\mathbf{q}_{W \mid R}^{-1} \mathbf{p}_{WX}$ is still non-linear in $\mathbf{q}_{W \mid R}$,
but (\ref{eqn:inverse}) suggests an equivalent procedure consisting of two separate problems where
$q_{w_0 \mid r_0} > q_{w_0 \mid r_1}$ or $q_{w_0 \mid r_0} < q_{w_0 \mid r_1}$, respectively.
Concretely, for each case $s=1$ and $s=-1$, we consider 2 linear programs
\begin{equation}
  \begin{aligned}
    \max_{ q_{w_0 \mid r_0} } &&& \pm q_{w_0 \mid r_0} \\
    \text{s.t.}
      &&& s \cdot \begin{pmatrix} 1-q_{w_0 \mid r_1} & - q_{w_0 \mid r_1} \\ q_{w_0 \mid r_0}-1 & q_{w_0 \mid r_0} \end{pmatrix}  \mathbf{p}_{WX}  \geq \mathbf{0}, \\
      &&& s \cdot q_{w_0 \mid r_0} > s \cdot q_{w_0 \mid r_1}, 0 \leq q_{w_0 \mid r_0} \leq 1, \\
      &&& q_{w_0 \mid r_1} = p_{w_0 \mid x_1}. \\
  \end{aligned}
\end{equation}

These problems could be solved analytically using fast linear program solvers, yielding the following partial identification result for $p(W \mid R)$. A detailed proof is in the Appendix.

\begin{theorem}[ZI MCAR compatibility bound]
  \label{thm:zi_mcar_bound}
  Consider a ZI MCAR model in Fig.~\ref{fig:zi_mcar_mar} (a) under proxy assumptions {\bf A1}, {\bf A2},
  with categorical $X$ and binary $R, W$.
  Given a consistent observed law $p(X, W)$ satisfying positivity assumption, $\forall x, p(x) > 0$,
  the set of compatible proxy-indicator conditionals $q(W \mid R)$ is given by
  \begin{align*}
    q_{w_0 \mid r_1} &= p_{w_0 \mid x_1}\\
    q_{w_0 \mid r_0} &\in
    \begin{cases}
    [ p_{w_0 \mid x_0}, 1] \text{ if }p_{w_0 \mid x_0} > p_{w_0 \mid x_1}\\
    [ 0, p_{w_0 \mid x_0}] \text{ if }p_{w_0 \mid x_0} < p_{w_0 \mid x_1}\\
    (0, 1) \setminus \{ p_{w_0 \mid x_0} \}\text{ if } p_{w_0 \mid x_0} = p_{w_0 \mid x_1}
    \end{cases}
  \end{align*}
  These bounds are sharp.
  Moreover, if $p_{w_0 \mid x_0} = p_{w_0 \mid x_1}$, $p(X, W)$ must satisfy $0 < p_{w_0 \mid x_0} < 1$,
  and zero inflation does not occur, i.e., $q(R=0)=0$.
\end{theorem}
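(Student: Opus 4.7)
The plan is to follow the linear-programming reduction sketched in the preceding discussion, then read off the feasible region for $q_{w_0 \mid r_0}$ by an elementary case analysis on the sign of $\alpha - \beta$, and finally construct explicit witnesses to establish sharpness.

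The first step exploits constraint $\mathbf{Z}$ to point-identify $q_{w_0 \mid r_1}$ and to extract a testable marginal constraint on the observed law. Since $q_{r_0, x} = 0$ for $x \neq 0$, we have $q_{r_1, x} = q_x = p_x$ and $q_{w_0, x} = q_{w_0 \mid r_1}\, p_x$, so $q_{w_0 \mid r_1} = p_{w_0 \mid x}$ for every $x \neq 0$. Hence $q_{w_0 \mid r_1} = p_{w_0 \mid x_1}$ as claimed, and the model enforces the testable equality $p_{w_0 \mid x} = p_{w_0 \mid x_1}$ for all $x \neq 0$, which we assume the observed law respects.

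The second step derives the feasibility inequalities for $\alpha := q_{w_0 \mid r_0}$ with $\beta := q_{w_0 \mid r_1} = p_{w_0 \mid x_1}$. By observations (1) and (2) of the main text, it suffices to impose non-negativity of the $x=0$ column of $\mathbf{q}_{W \mid R}^{-1}\mathbf{p}_{WX}$, because the $x \neq 0$ columns are forced to $(0, p_x)^T$ by the testable constraint. Using the explicit inverse in (\ref{eqn:inverse}) and $p_{w_0, 0} + p_{w_1, 0} = p_{x_0}$, the column entries simplify to
\begin{equation*}
q_{r_0, 0} = \frac{p_{x_0}(p_{w_0 \mid x_0} - \beta)}{\alpha - \beta}, \qquad q_{r_1, 0} = \frac{p_{x_0}(\alpha - p_{w_0 \mid x_0})}{\alpha - \beta}.
\end{equation*}
Letting $s = \operatorname{sign}(\alpha - \beta)$, the non-negativity conditions read $s \cdot (p_{w_0 \mid x_0} - \beta) \geq 0$ and $s \cdot (\alpha - p_{w_0 \mid x_0}) \geq 0$. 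Splitting on the sign of $p_{w_0 \mid x_0} - p_{w_0 \mid x_1}$ yields exactly the three regions in the theorem: the $s=+1$ branch forces $p_{w_0 \mid x_0} \geq \beta$ and $\alpha \geq p_{w_0 \mid x_0}$; the $s=-1$ branch forces $p_{w_0 \mid x_0} \leq \beta$ and $\alpha \leq p_{w_0 \mid x_0}$; and in the equality case $p_{w_0 \mid x_0} = p_{w_0 \mid x_1}$ both branches are available, $q_{r_0, 0}$ vanishes identically, so $q(R=0)=0$, i.e., no zero inflation.

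The final step verifies sharpness and handles the equality degeneracy. For any $\alpha$ in the stated region, set $\mathbf{q}_{W \mid R}$ accordingly and define $\mathbf{q}_{RX} := \mathbf{q}_{W \mid R}^{-1}\mathbf{p}_{WX}$; the inequalities above guarantee non-negativity, observation (2) (proved in the Appendix) guarantees the entries sum to $1$, $\mathbf{Z}$ holds automatically by the testable constraint, and $\mathbf{A2}$ holds because $\alpha \neq \beta$, so $(\mathbf{q}_{W \mid R}, \mathbf{q}_{RX}) \in \mathcal{Q}$. For the restriction $0 < p_{w_0 \mid x_0} < 1$ in the equality case, observe that $p_{w_0 \mid x_0} \in \{0,1\}$ combined with the testable constraint forces $p(W)$ to be marginally degenerate, which together with $\mathbf{A2}$ and positivity of $p(x)$ contradicts the existence of an invertible $\mathbf{q}_{W \mid R}$ producing $\mathbf{p}_{WX}$. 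The main obstacle is the sign bookkeeping: correctly tracking how the denominator $\alpha - \beta$ flips the direction of each inequality, and confirming that the $x \neq 0$ columns collapse exactly under the testable constraint so that no extra conditions survive.
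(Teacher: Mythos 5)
Your proposal is correct and follows essentially the same route as the paper's proof: point-identify $q_{w_0\mid r_1}$ from \textbf{Z}, reduce the quadratic compatibility check to a linear feasibility problem via the explicit $2\times 2$ inverse, split on the sign of $q_{w_0\mid r_0}-q_{w_0\mid r_1}$, and obtain sharpness because the stated intervals are exactly the feasible regions. Your only departure is organizational — you note up front that the $x\neq 0$ columns of $\mathbf{q}_{W\mid R}^{-1}\mathbf{p}_{WX}$ collapse to $(0,p_x)^{T}$ under the marginal constraint, so that only the $x=0$ column yields inequalities, whereas the paper lists and discharges all four families of inequalities explicitly; the resulting case analysis and bounds are identical.
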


\subsection{Partial Identification in ZI MAR}
\label{subsec:pid_zi_mar}

We compute analytical bounds for two versions of the proxy-augmented ZI MAR model, illustrated  in Fig.~\ref{fig:zi_mcar_mar} (b) and (c).
The first model has $C \rightarrow W$ and satisfies {\bf A1$^{\dag}$} and {\bf A2$^{\dag}$},
while $C \not\rightarrow W$ in the second model, and the proxy assumptions are {\bf A1$^{*}$} and {\bf A2$^{*}$}.

In the first proxy-augmented ZI MAR model,
the set of compatible $p_{W \mid R, C}$ is given by the Cartesian product of the independently determined ZI MCAR bounds for each value $c$.
This leads to the following direct analogue of Theorem~\ref{thm:zi_mcar_bound}.
The proof is deferred to the Appendix.

\begin{theorem}[ZI MAR compatibility bound 1]
  \label{thm:zi_mar_bound_1}
  Consider a ZI MAR model in Fig.~\ref{fig:zi_mcar_mar} (b)
  under proxy assumptions {\bf A1$^{\dag}$} and {\bf A2$^{\dag}$}, 
  with categorical $X, C$ and binary $R, W$.
  Given a consistent observed law $p(X, W, C)$ satisfying positivity assumption, $\forall x, c, p(x, c) > 0$,
  the set of compatible proxy-indicator conditional distributions $q(W \mid R,C)$ is given by, for each  value $c$,
  \begin{align*}
      q_{w_0 \mid r_1, c} &= p_{w_0 \mid x_1, c}\\
      q_{w_0 \mid r_0, c} &\in
      \begin{cases}
      [ p_{w_0 \mid x_0, c}, 1] \text{ if }p_{w_0 \mid x_0, c} > p_{w_0 \mid x_1, c}\\
      [ 0, p_{w_0 \mid x_0, c}] \text{ if }p_{w_0 \mid x_0, c} < p_{w_0 \mid x_1, c}\\
      (0, 1) \setminus \{ p_{w_0, \mid x_0, c} \}\text{ if } p_{w_0 \mid x_0, c} = p_{w_0 \mid x_1, c}
      \end{cases}
  \end{align*}
  These bounds are sharp.
  Moreover, if $p_{w_0 \mid x_0, c} = p_{w_0 \mid x_1, c}$, $p(X, W, C)$ must satisfy $0 < p_{w_0 \mid x_0, c} < 1$,
  and zero inflation does not occur for stratum $C=c$, i.e., $q(R=0 \mid c)=0$.
\end{theorem}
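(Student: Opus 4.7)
The plan is to reduce Theorem~\ref{thm:zi_mar_bound_1} to Theorem~\ref{thm:zi_mcar_bound} by stratifying on $C$, since the ZI MAR model in Fig.~\ref{fig:zi_mcar_mar} (b) is structurally a ZI MCAR model once we condition on $C=c$. I would first fix an arbitrary value $c$ and consider the conditional law $p(X,W\mid C=c)$, together with the candidate conditional $q(W\mid R, C=c)$. The positivity assumption $p(x,c)>0$ guarantees these conditional objects are well-defined.

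Next, I would verify that the stratum-wise objects satisfy the conditions of the ZI MCAR model $\mathcal{P}$. The m-DAG in Fig.~\ref{fig:zi_mcar_mar} (b) implies $R \ci X^{(1)} \mid C$, so conditionally on $C=c$ the ZI variable and indicator are marginally independent, matching ZI MCAR. Assumption {\bf A1$^{\dag}$} gives $W \ci X^{(1)} \mid C, R$, which conditionally on $C=c$ becomes exactly {\bf A1}, and {\bf A2$^{\dag}$} gives invertibility of $\mathbf{p}_{W\mid R,c}$, which is exactly {\bf A2} within the stratum. The ZI consistency constraint {\bf Z} holds pointwise in $c$ as well: $q(R=0, X=x\mid C=c) = 0$ for all $x\neq 0$. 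Thus the linear equation underlying the restoration, $p(w,x\mid c) = \sum_r q(w\mid r,c)\, q(r,x\mid c)$, is precisely the ZI MCAR matrix equation of Eq.~(\ref{eq:compatible_set}) with $\mathbf{p}_{WX}$ replaced by $\mathbf{p}_{WX\mid c}$ and $\mathbf{q}_{W\mid R}$ replaced by $\mathbf{q}_{W\mid R, c}$.

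Applying Theorem~\ref{thm:zi_mcar_bound} inside the stratum $C=c$ then yields the stated point identification $q_{w_0\mid r_1,c} = p_{w_0\mid x_1,c}$ and the three-case bound on $q_{w_0\mid r_0,c}$ as a function of the sign of $p_{w_0\mid x_0,c}-p_{w_0\mid x_1,c}$, together with sharpness and the degenerate-case conclusion that $p_{w_0\mid x_0,c}=p_{w_0\mid x_1,c}$ forces $0<p_{w_0\mid x_0,c}<1$ and $q(R=0\mid c)=0$. Finally, to pass from per-stratum bounds to the global description, I would argue that no constraint in the model couples $q(W\mid R,c)$ across different values of $c$: the conditional independence {\bf A1$^{\dag}$} and invertibility {\bf A2$^{\dag}$} are both pointwise-in-$c$ statements, and consistency {\bf Z} and the compatibility equation both decompose by $c$. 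Hence the global compatible set is the Cartesian product over $c$ of the per-stratum compatible sets, and sharpness lifts because any collection of stratum-wise witnesses can be glued using the known marginal $p(C)$ into a joint witness in the full ZI MAR model.

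The main obstacle is this last gluing/independence-across-strata argument: one must confirm that no global feature of the model (in particular no restriction on how $q(W\mid R,c)$ varies with $c$) introduces a coupling that would shrink the Cartesian product. Once the m-DAG is inspected and seen to place no edge or marginalization constraint linking these stratum-wise conditionals beyond what follows from A1$^{\dag}$/A2$^{\dag}$, the reduction is immediate and the rest is a direct invocation of Theorem~\ref{thm:zi_mcar_bound}.
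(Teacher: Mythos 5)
Your proposal is correct and follows essentially the same route as the paper: both reduce to the ZI MCAR bound stratum-by-stratum, observing that {\bf A1$^{\dag}$}/{\bf A2$^{\dag}$} and the compatibility equation decompose by $c$, and both justify sharpness via the Cartesian-product structure of the model over strata (the paper formalizes your ``no cross-stratum coupling'' step with the identity $(A\otimes B)\cap(C\otimes D)=(A\cap C)\otimes(B\cap D)$ applied to the model and the constraint set). Your gluing argument for sharpness is exactly the content of that product decomposition, so nothing is missing.
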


On the other hand, the compatibility bound for the second ZI MAR model is the intersection of the ZI MCAR bounds for each values $c$. The proof is deferred to the Appendix.
\begin{theorem}[ZI MAR compatibility bound 2]
  \label{thm:zi_mar_bound_2}
  Consider a ZI MAR model in Fig.~\ref{fig:zi_mcar_mar} (c)
  under proxy assumptions {\bf A1$^*$} and {\bf A2$^*$},
  with categorical $X, C$ and binary $R, W$.
  Given a consistent observed law $p(X, W, C)$ satisfying positivity assumption, $\forall x, c, p(x, c) > 0$,
  the set of compatible proxy-indicator conditional distributions $q(W \mid R)$ is given by
  \begin{align*}
      q_{w_0 \mid r_1} &= p_{w_0 \mid x_1}\\
      q_{w_0 \mid r_0} &\in
      \begin{cases}
      [ \max_{c} p_{w_0 \mid x_0, c}, 1] \text{ if } \exists \tilde{c}, p_{w_0 \mid x_0, \tilde{c}} > p_{w_0 \mid x_1},\\
      [ 0, \min_c p_{w_0 \mid x_0, c}] \text{ if } \exists \tilde{c}, p_{w_0 \mid x_0, \tilde{c}} < p_{w_0 \mid x_1},\\
      (0, 1) \setminus \{ p_{w_0 \mid x_1} \}\text{ if } \forall c, p_{w_0 \mid x_0, c} = p_{w_0 \mid x_1}.
      \end{cases}
  \end{align*}
  These bounds are sharp.
  Moreover, if $\forall c, p_{w_0 \mid x_0, c} = p_{w_0 \mid x_1}$, $p(X, W, C)$ must satisfy $\forall c, 0 < p_{w_0 \mid x_0, c} < 1$,
  and zero inflation does not occur, i.e., $q(R=0)=0$.
\end{theorem}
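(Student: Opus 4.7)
The plan is to reduce the problem to a per-stratum application of Theorem~\ref{thm:zi_mcar_bound} and then take an intersection (rather than a Cartesian product as in Theorem~\ref{thm:zi_mar_bound_1}), with the intersection forced by the fact that under {\bf A1$^{*}$} the candidate $\mathbf{q}_{W\mid R}$ cannot depend on $c$.

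First, I would use {\bf A1$^{*}$}, which gives $W\Perp C\mid R$, to conclude $q(W\mid R,c)=q(W\mid R)$ for every $c$. Conditioning the compatibility equation on $C=c$ then yields
\begin{equation*}
    \mathbf{q}_{W\mid R}\,\mathbf{q}_{RX\mid c} \;=\; \mathbf{p}_{WX\mid c}\qquad \text{for every }c,
\end{equation*}
with the same $\mathbf{q}_{W\mid R}$ across strata. For each fixed $c$ this is exactly the ZI MCAR compatibility problem with observed law $p(X,W\mid c)$, and Theorem~\ref{thm:zi_mcar_bound} applies to give the stratum-specific compatible set. In particular, $q_{w_0\mid r_1}=p_{w_0\mid x_1,c}$; combined with {\bf A1$^{*}$} (which implies $p(W\mid R=1,C)=p(W\mid R=1)$, since $R=1$ is equivalent to $X\neq 0$ for $X_i^{(1)}$ supported on nonzero values), we get that $p_{w_0\mid x_1,c}$ is the same across all $c$ and equals $p_{w_0\mid x_1}$, recovering the first line of the displayed conclusion.

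Second, I would take the intersection of per-stratum sets for $q_{w_0\mid r_0}$. Writing each stratum's set via the three cases of Theorem~\ref{thm:zi_mcar_bound} (comparison of $p_{w_0\mid x_0,c}$ with the common value $p_{w_0\mid x_1}$), the intersection is computed as follows: if some $\tilde{c}$ satisfies $p_{w_0\mid x_0,\tilde{c}}>p_{w_0\mid x_1}$, that stratum alone forces $q_{w_0\mid r_0}\ge p_{w_0\mid x_0,\tilde{c}}$, and intersecting over all such strata (together with the $[0,p_{w_0\mid x_0,c}]$-type or $(0,1)\setminus\{p_{w_0\mid x_1}\}$-type sets from the remaining strata, both of which contain $[\max_c p_{w_0\mid x_0,c},1]$ only when no $c'$ produces the opposite strict inequality) yields $[\max_c p_{w_0\mid x_0,c},1]$. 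The symmetric argument handles the case with some $\tilde{c}$ satisfying the reverse strict inequality, and the equality case is immediate. Importantly, consistency of $p(X,W,C)$ with the model rules out the possibility that both strict inequalities occur at different strata (which would make the intersection empty), so the three cases are exhaustive under the stated hypotheses; invertibility $q_{w_0\mid r_0}\neq p_{w_0\mid x_1}$ is automatic in the two strict cases and is the exclusion in the equality case.

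Third, for sharpness, I would show that every $q_{w_0\mid r_0}$ in the derived intersection is realized by a valid joint. Fix any such value; for each $c$, the per-stratum sharpness of Theorem~\ref{thm:zi_mcar_bound} furnishes a stochastic $\mathbf{q}_{RX\mid c}$ that is {\bf Z}-consistent and satisfies the compatibility equation. Combining these with the observed $p(C)$ produces a full joint in the proxy-augmented ZI MAR model of Fig.~\ref{fig:zi_mcar_mar}(c) reproducing $p(X,W,C)$. The no-inflation consequence in the equality case follows because the per-stratum conclusion $q(R=0\mid c)=0$ holds for every $c$, hence $q(R=0)=0$; the positivity bound $0<p_{w_0\mid x_0,c}<1$ is similarly inherited stratumwise.

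The main obstacle I anticipate is the bookkeeping in Step~2: explicitly verifying that, under the stated mutually exclusive casing, the intersection of the per-stratum sets collapses to the claimed closed form, especially reconciling the $(0,1)\setminus\{p_{w_0\mid x_1}\}$ stratumwise exclusions with the closed intervals arising from strict strata, and confirming that the invertibility requirement is preserved under intersection. Everything else reduces cleanly to Theorem~\ref{thm:zi_mcar_bound} applied pointwise in $c$.
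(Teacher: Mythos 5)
Your proposal is correct and follows essentially the same route as the paper: reduce to the ZI MCAR compatibility problem within each stratum $C=c$, use the $c$-independence of $\mathbf{q}_{W\mid R}$ under {\bf A1$^*$} to force agreement of $q_{w_0\mid r_1}=p_{w_0\mid x_1,c}$ across strata, intersect the per-stratum compatible sets to obtain the $\max/\min$ bounds, and inherit sharpness stratumwise from Theorem~\ref{thm:zi_mcar_bound}. The paper phrases this via an explicit Cartesian-product/intersection decomposition of the model, but the substance is identical.
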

Note that the first two cases are mutually exclusive due to the following lemma.
\begin{lemma}
    \label{lm:zi_mar_obs_constraint}
    For a ZI MAR model in Fig.~\ref{fig:zi_mcar_mar} (b) under 
    {\bf A1$^\dag$} and {\bf A2$^\dag$}, the observed law $p(X, W, C)$ obeys
    \begin{equation}
        \label{eq:zi_mar_obs_constraint_1}
        \forall c, \forall x \neq 0, p_{w_0 \mid x, c} = p_{w_0 \mid x_1, c}.
    \end{equation}
    For a ZI MAR model in Fig.~\ref{fig:zi_mcar_mar} (c) under 
    {\bf A1$^*$} and {\bf A2$^*$}, the observed law $p(X, W, C)$ obeys
    \begin{align}
        \label{eq:zi_mar_obs_constraint_2}
        & \forall c, \forall x \neq 0, p_{w_0 \mid x, c} = p_{w_0 \mid x_1}, \\
        \notag
        & \text{either } \forall c \left(p_{w_0 \mid x_0, c} \leq p_{w_0 \mid x_1} \right) \text{ or } \forall c \left(p_{w_0 \mid x_0, c} \geq p_{w_0 \mid x_1}\right).
    \end{align}
\end{lemma}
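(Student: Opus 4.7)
The plan is to derive both sets of observed-law restrictions by combining the zero-inflation restriction \textbf{Z} (which deterministically pins $R = 1$ on the event $\{X \neq 0\}$) with the appropriate conditional independence assumption, together with a single application of the law of total probability over $R$ for the dichotomy in case (c).

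First I would handle Fig.~\ref{fig:zi_mcar_mar}(b) under \textbf{A1$^\dagger$}. Restriction \textbf{Z} yields $p(R=0, X=x, C=c) = 0$ whenever $x \neq 0$, so for any $(x,c)$ with $x \neq 0$ and $p(x,c) > 0$ the events $\{X=x, C=c\}$ and $\{R=1, X^{(1)}=x, C=c\}$ agree up to a null set (the consistency rule $X = X^{(1)}$ when $R = 1$ supplies the second identity). Applying \textbf{A1$^\dagger$}, which in the single-variable setting reads $W \Perp X^{(1)} \mid R, C$, then gives
\[
p(w_0 \mid X = x, C = c) = p(w_0 \mid R = 1, X^{(1)} = x, C = c) = p(w_0 \mid R = 1, C = c),
\]
independent of $x$. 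Comparing the values at a generic $x \neq 0$ with the value at $x = x_1$ produces \eqref{eq:zi_mar_obs_constraint_1}.

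Next I would treat Fig.~\ref{fig:zi_mcar_mar}(c) under \textbf{A1$^*$}. The same chain of equalities, now invoking $W \Perp X^{(1)}, C \mid R$, additionally strips $C$ from the conditioning, yielding $p(w_0 \mid X = x, C = c) = p(w_0 \mid R = 1)$ for every $x \neq 0$; this is the first line of \eqref{eq:zi_mar_obs_constraint_2}. For the monotonicity clause, decompose over $R$ and use \textbf{A1$^*$} to reduce both conditional probabilities of $W$ to their $R$-marginal forms:
\[
p(w_0 \mid x_0, c) = p(R=1 \mid x_0, c)\, p(w_0 \mid R=1) + p(R=0 \mid x_0, c)\, p(w_0 \mid R=0).
\]
Subtracting $p(w_0 \mid x_1) = p(w_0 \mid R=1)$ gives
\[
p(w_0 \mid x_0, c) - p(w_0 \mid x_1) = p(R=0 \mid x_0, c)\,\bigl[ p(w_0 \mid R=0) - p(w_0 \mid R=1) \bigr].
\]
Because $p(R=0 \mid x_0, c) \geq 0$ and the bracketed factor is a constant independent of $c$, the sign of the left-hand side is either uniformly $\leq 0$ in $c$ or uniformly $\geq 0$ in $c$ (being zero in any stratum where no inflation occurs), establishing the second line of \eqref{eq:zi_mar_obs_constraint_2}.

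The only subtle step is the identification of $\{X = x\}$ (for $x \neq 0$) with $\{R = 1, X^{(1)} = x\}$, which requires simultaneously invoking consistency and restriction \textbf{Z}; once that observation is in hand the rest is a direct application of the stated independence assumption and an elementary two-term expansion, with positivity ensuring well-definedness of every conditional probability in sight.
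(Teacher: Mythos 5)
Your proposal is correct and follows essentially the same route as the paper: the first-line constraints come from noting that $\{X=x\}$ with $x\neq 0$ forces $R=1$ via \textbf{Z} plus consistency and then applying the relevant independence of $W$, and the dichotomy comes from the identity $p_{w_0 \mid x_0, c} - p_{w_0 \mid x_1} = p(R=0 \mid x_0, c)\,\bigl(p_{w_0\mid r_0} - p_{w_0 \mid r_1}\bigr)$, which is exactly the factorization the paper uses (there packaged as the mutual exclusivity of the two cases in the stratum-wise ZI MCAR bounds). Your presentation is a self-contained and slightly more direct version of the paper's argument, which instead collects these constraints as byproducts of the proofs of the two ZI MAR bound theorems.
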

Note that, as before, the marginal constraints described may be used to design a model falsification test.

\subsection{Partial Identification In ZI MNAR}
\label{subsec:pid_zi_mnar}

Consider the ZI version of any MNAR model represented by an m-DAG where the target law is identified.
In missing data, an important subclass of such models are submodels of the no-self-censoring model in
\citet{malinsky2021semiparametric} due to the results in \citet{bhattacharya20completeness}.
The ZI versions of such models exhibit a crucial complication not found in previously discussed ZI models, namely that multiple variables may be zero inflated. 
For these models, we posit a set of proxies $W = \{ W_1, \ldots, W_n \}$ corresponding to $R = \{ R_1, \ldots, R_n \}$,
and assume assumptions {\bf A1$^*$}, {\bf A2$^*$} in Section~\ref{subsec:zi_law_rest} are satisfied.
Fig.~\ref{fig:zi_mnar} (a) and (b) show two bivariate examples of such models.
We use the short hand $p_{w_{ka} \mid x_{kb}} = p(W_k=a \mid X_{k}=b)$.

Given observed law $p(X, W, C)$, we seek the compatible set of $\{q(W_k \mid R_k)\}_{k=1}^n$,
whose elements allow restoration of $p(R, X,W,C)$ via Theorem~\ref{thm:rest-general}.
Although sharp bounds for $\{q(W_k \mid R_k)\}_{k=1}^n$ are unknown, the ZI MAR partial identification procedure could be applied to each $R_k$ independently to obtain bounds for $q_{w_{k0} \mid r_{k0}}$.  Moreover, due to the usual properties of ZI, $q_{w_{k0} \mid r_{k1}}$ is point identified for each $k$.

For each $k$, we apply Theorem~\ref{thm:zi_mar_bound_2} with $X_k, R_k, W_k$ being $X, R, W$, respectively, and $Z_k \triangleq \{X, W, C\} \setminus \{X_k, W_k\}$ being the covariates $C$.
These bounds are not sharp as structural constraints of the MNAR model are not considered.  However, these bounds are valid in the sense that the Cartesian product of these bounds contains the true model compatible set of distributions $\{ p(W_k \mid R_k) : k = 1, \ldots n \}$.

In addition, we note that (\ref{eq:general_obs_constraint}) below hold in the observed law under our model, and may be used as falsification test for our ZI model.
\begin{lemma}
    \label{lm:general_obs_constraint}
    Consider any ZI model in Section~\ref{subsec:zi_law_rest} under {\bf A1$^*$} and {\bf A2$^*$}. Denote $Z_k \triangleq \{X, W, C\} \setminus \{W_k, X_k\}$. The observed law $p(X, W, C)$ must satisfy, for each $k$,
    \begin{align}
        \label{eq:general_obs_constraint}
        & \forall z_k, \forall x \neq 0 p_{w_{k0} \mid x_{k}=x, z_k} = p_{w_{k0} \mid z_{k1}}, \\
        \notag
        & \forall z_k \left( p_{w_{k0} \mid x_{k0}, z_k} \leq p_{w_{k0} \mid x_{k1}} \right) \text{ or } \forall z_k \left(p_{w_{k0} \mid x_{k0}, z_k} \geq p_{w_{k0} \mid z_{k1}}\right).
    \end{align}
\end{lemma}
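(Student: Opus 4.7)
The plan is to combine the proxy independence {\bf A1$^*$} with the ZI consistency rule, exploiting that in the proxy-augmented m-DAG the only non-trivial parent of each $W_i$ is $R_i$, so every $W_i$ is independent of all remaining observed variables given its own indicator. The first constraint then falls out of consistency collapsing $R_k$ onto $1$ whenever $X_k \neq 0$; the uniform-sign (OR) condition follows from an intersection-nonemptyness argument against Theorem~\ref{thm:zi_mar_bound_2} applied stratumwise.

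For the equality, I first upgrade {\bf A1$^*$} to the stronger statement $W_k \ci X, W_{-k}, C, R_{-k} \mid R_k$. This is a routine d-separation check in the proxy-augmented graph: each $X_j$ is a deterministic function of $(X_j^{(1)}, R_j)$ which is already captured by $X^{(1)} \cup R$, and each $W_j$ for $j \neq k$ has $R_j$ as its unique non-trivial parent by {\bf A1$^*$} applied at index $j$, so conditioning on $W_{-k}$ adds nothing once $R_{-k}$ is present. Now for $x \neq 0$, ZI consistency gives $\{X_k = x\} = \{X_k^{(1)} = x, R_k = 1\}$, hence
\begin{equation*}
p_{w_{k0} \mid x_k = x, z_k} = p(W_k = 0 \mid X_k = x, Z_k = z_k, R_k = 1) = p(W_k = 0 \mid R_k = 1),
\end{equation*}
which depends on neither $x$ nor $z_k$. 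Taking $x = 1$ identifies the common value as $p_{w_{k0} \mid x_{k1}}$, establishing the identity (I read the $z_{k1}$ in the lemma statement as a typographic slip for $x_{k1}$).

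For the uniform-sign condition, the same extended independence shows that $q_{w_{k0} \mid r_{k0}} := p(W_k = 0 \mid R_k = 0)$ is a single scalar that does not vary with $z_k$. For each fixed $z_k$, the stratified marginal $p(X_k, W_k \mid z_k)$, together with the already-identified $q_{w_{k0} \mid r_{k1}} = p_{w_{k0} \mid x_{k1}}$, defines a conditional ZI MCAR subproblem in $(X_k, R_k, W_k)$; Theorem~\ref{thm:zi_mar_bound_2} applied with $z_k$ in the role of $c$ then forces $q_{w_{k0} \mid r_{k0}}$ into a one-sided compatibility interval whose orientation is dictated by the sign of $p_{w_{k0} \mid x_{k0}, z_k} - p_{w_{k0} \mid x_{k1}}$. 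A single value of $q_{w_{k0} \mid r_{k0}}$ can satisfy all per-stratum intervals simultaneously only if these signs agree across $z_k$, because otherwise two strata would jointly demand $q_{w_{k0} \mid r_{k0}} \geq p_{w_{k0} \mid x_{k0}, z_k} > p_{w_{k0} \mid x_{k1}}$ and $q_{w_{k0} \mid r_{k0}} \leq p_{w_{k0} \mid x_{k0}, z_k'} < p_{w_{k0} \mid x_{k1}}$, a contradiction. The one point requiring care is the independence upgrade — verifying that the coordinate-wise assumption {\bf A1$^*$} composes correctly over the $W_i$'s to allow conditioning on $W_{-k}$ and $X$ — but this is a standard d-separation check in the proxy-augmented m-DAG rather than a genuine obstacle.
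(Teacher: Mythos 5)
Your proposal is correct and follows essentially the same route as the paper: establish that $W_k$ is independent of all remaining observed variables given $R_k$ (the paper does this by embedding the joint into a model where $\{X_k, R_k, Z_k\}$ is saturated and $R_k \rightarrow W_k$), derive the equality from ZI consistency forcing $R_k = 1$ on $\{X_k = x\}$ for $x \neq 0$, and obtain the sign condition by intersecting the per-stratum one-sided intervals for the single shared scalar $q_{w_{k0} \mid r_{k0}}$, exactly as in the proof of Theorem~\ref{thm:zi_mar_bound_2}. Your reading of $p_{w_{k0} \mid z_{k1}}$ as a typographic slip for $p_{w_{k0} \mid x_{k1}}$ is also consistent with the paper's own notation and proofs.
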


\subsection{Identification Given A Known Zero Inflation Probability}

For ZI MCAR models in Theorem~\ref{thm:zi_mcar_bound} and ZI MAR model in Theorem~\ref{thm:zi_mar_bound_2}, we provided the identification $q_{w_0 \mid r_1} = p_{w_0 \mid x_1}$ and the bounds for $q_{w_0 \mid r_0}$, which lead to partial identification of the full law $p(X^{(1)}, R, X, W, C)$.

If $q_{w_0 \mid r_0}$ is known a priori, the full law is point identified.  Alternatively, point identification of the full law may be obtained if the zero inflation probability, or $p(R=0)$, is known.

This is because the joint distribution $p(W,R)$ for binary $W,R$ has dimension $3$, and one (variationally dependent) parameterization for this joint is via the following $3$ parameters $p(R = 0)$, $p(W = 0)$, $q_{w_0 \mid r_1}$. This is easy to see by noting that we can compute $p(R = 0, W = 0) = q_{w_0 \mid r_1} (1 - p(R = 0))$, and $p(R = 0)$, $p(W = 0)$, and $p(R = 0, W=0)$ are the M\"obius parameters for $p(W,R)$ \citep{evans14markovian}.

In particular, we have the following: $q_{w_0 \mid r_0} = \frac{p_{w_0} - q_{w_0 \mid r_1} p_{r_1}}{p_{r_0}} = \frac{p_{w_0} - p_{w_0 \mid x_1} p_{r_1}}{p_{r_0}}$, which in turns implies point identification of the full law.

Note that not every zero inflation probability $p(R = 0)$ is compatible with the model.  This is easily seen by noting that the M\"obius parameterization is variationally dependent, and two parameters, namely $p(W = 0)$ and $q_{w_0 \mid r_1}$, are known.  Howevre, our derived bounds for $q_{w_0 \mid r_1}$ naturally imply bounds for $p(R = 0)$, with sharp bounds for the former implying sharp bounds for the latter.

\section{Experiments}
\label{sec:experiments}

We confirmed the validity of our analytical results for inflated zero models by sampling data generating processes (DGPs), and numerical methods.
In addition, we used our methods to perform sensitivity analyses on CLABSI data. Details of these experiments are in the Appendix.
The code could be found at \url{https://github.com/trungpq-ci/zero-inflation-bounds}.

\subsection{Bound Validity In Random DGPs}
\label{subsec:sims}

We verify the results of Theorem~\ref{thm:zi_mcar_bound}, Theorem~\ref{thm:zi_mar_bound_2} and related observed law constraints
by randomly generating DGPs in models we described.
We generated $10^8$ DGPs in the model in Fig.~\ref{fig:zi_mcar_mar} (a), satisfying ZI-consistency, {\bf A1}, {\bf A2}, and $10^8$ DGPs in the model in Fig.~\ref{fig:zi_mcar_mar} (b), satisfying ZI-consistency, {\bf A1$^*$}, {\bf A2$^*$}.
For both cases, we verified identification of $q_{w_0 \mid r_1}$ and the bounds for $q_{w_0 \mid r_0}$ as predicted by the corresponding theorem.
For the bounds, two tests were conducted
\begin{enumerate}
    \item \textit{Bound validity}: is the true $p_{w_0|r_0}$ inside the bounds?
    \item \textit{Model consistency}: grid search the bound, compute $p(r,x)$ (or $p(r,x,c)$) according to Theorem~\ref{thm:rest-zcar} (or ~\ref{thm:rest-general}), and verify that these are probability distributions.
\end{enumerate}
Additionally, for ZI MAR, we checked marginal constraints in (\ref{eq:zi_mar_obs_constraint_2}).
We found that all considered results held up to floating point precision in every single DGP.

\subsection{Bounds By Numerical Methods}

We compared our analytical bounds with numerical bounds computed using \texttt{autobounds} package in \citet{duarte23automated} for a subset of DGPs used for verification of bound validity in Section~\ref{subsec:sims}.

In particular, 20 DGPs were randomly selected for each model (ZI MCAR and ZI MAR), and their observed laws were computed.
For each DGP, 2 polynomial programs were constructed,
whose objective functions are maximizing or minimizing $p(W=0 | R=0)$, respectively, and whose constraints are (i) structural constraints from the corresponding graph, (ii) probability constraints, (iii) ZI-consistency constraint, (iv) constraints resulted from the structure imposed on the observed law by the structure of the full law.
The solutions to these programs are the numerical lower and upper bounds of $p(W=0 | R=0)$.
We refer reader to \citet{duarte23automated} for details of the program's construction and the methods used by the polynomial program solver.

For all DGPs, the numerical bounds coincided with our analytical bounds
up to the $4$th decimal place.
Since the algorithm in the \texttt{autobounds} package is an anytime algorithm, our analytic bounds were always contained inside the numerical bounds.
Table \ref{tab:num_bounds} shows a selection of these results.

\begin{table}
\centering
\footnotesize
    \begin{tabular}{|c| c | c | c | c | c|} 
        \hline
        MCAR  &  lb      &  ub      &  num lb  &  num ub  &  $p_{w_0|r_0}$ \\
        \hline\hline
        0     &  0.5564  &  1       &  0.5564  &  1       &  0.8207  \\
        \hline
        1     &  0.3578  &  1       &  0.3578  &  1       &  0.4936  \\
        \hline
        2     &  0       &  0.5206  &  0       &  0.5206  &  0.4536  \\
        \hline
        3     &  0.6064  &  1       &  0.6064  &  1       &  0.6826  \\
        \hline \hline
        
        MAR   &  lb      &  ub      &  num lb  &  num ub  &  $p_{w_0|r_0}$ \\
        \hline\hline
        0     &  0       &  0.4290  &  0       &  0.4290  &  0.4132  \\
        \hline
        1     &  0.8346  &  1       &  0.8346  &  1       &  0.8486  \\
        \hline
        2     &  0       &  0.3404  &  0       &  0.3404  &  0.3192  \\
        \hline
        3     &  0.3002  &  1       &  0.3002  &  1       &  0.5155  \\
        \hline
    \end{tabular}
  \caption{
    Comparison between our analytical lower and upper bound (\textit{lb}/\textit{ub})
    to numerical bounds (\textit{num lb}/\textit{num ub}) for a randomly selected set of DGPs. True $p_{w_0|r_0}$ is reported.
  }
  \label{tab:num_bounds}
\end{table}

\subsection{Data Application}

Patients receiving therapies involving central venous catheters (CVCs)
through home infusion agencies may develop CLABSI.
Though relatively rare, CLABSIs are potentially dangerous.
Knowing true CLABSI rates is essential in deploying and testing the impact of CLABSI prevention activities.
Recorded CLABSI rates undercount true positive cases. This is because adjudicators performing CLABSI surveillance
often lack access to the full information 
required to determine whether a CLABSI has occurred \citep{hannum2022task, hannum2023controlling}.
If the available information do not meet the CLABSI definition criteria, as CLABSIs are relatively rare, the adjudicator typically records the CLABSI status as a presumed negative. 

We will apply our zero inflation correction method to data on patients undergoing CVC therapies and thus potentially susceptible to a CLABSI.
Our data contains 652 unique patient records obtained from five different home infusion agencies
across 14 states and the District of Columbia, see \citep{Keller.Hannum.ea.2023.ImplementingValidating} for additional details.
These records correspond to records investigated on patients who presented to a hospital due to a complication and on whom blood cultures were drawn and were positive.
Many patients with CVCs who presented to the hospital due to a complication on whom blood cultures were drawn and were positive do have CLABSIs.
In fact, the observed CLABSI rate in our data was more than $65\%$, much higher than the prevalence in the population undergoing CVC therapies.
However due to zero inflation, even the elevated observed CLABSI rate undercounts the true CLABSI rate in this cohort.

Variables in our data included covariates $C$, which indicated home infusion therapy type and CVC type, coded as binary variables.
A description of these covariates is found in the Appendix.
The outcome of interest is the true CLABSI probability (had zero inflation not occurred), which we denote by $X^{(1)}$.
This outcome is not directly observed. Instead, our data contains the observed CLABSI status $X$, recorded as $0$ and $1$.
Given this variable, we define the inflation indicator $R$ which corresponds to the adjudicator having enough information to make a CLABSI determination for a particular case.
The information could come from private meeting with patients and specialists,
or from reading patients test results and other data in health record systems.
Recording conventions dictate that this indicator has a known value whenever the observed CLABSI is $1$, and is unobserved otherwise (since we cannot distinguish true negatives from inflated zeros).
We considered two candidates for the proxy $W$:
(i) adjudicator access to the shared electronic health record system EPIC, (ii) either adjudicator access to EPIC, or the statewide health information exchange CRISP.
Since $R$ encodes the state of knowing all required information from all sources, we have $R \rightarrow W$.

Our working model is the proxy-augmented ZI MAR under assumptions {\bf A1$^*$} and {\bf A2$^*$}, shown in Fig.~\ref{fig:zi_mcar_mar} (b).
Using the analytic bounds for the ZI MAR model derived in Section~\ref{subsec:pid_zi_mar},
we perform a sensitivity analysis to understand how the true CLABSI rate $p(X^{(1)}=1)$ changes as the proxy-indicator relationship $p(W \mid R)$
varies, within its compatibility range.
First, we use the EM algorithm \citep{dempster77maximum} to maximize the observed data likelihood $\hat{\cal L}_{EM}(X, W, C)$ defined via the full data distribution consistent with our
assumptions.
Next, we invoke Theorem~\ref{thm:zi_mar_bound_2} to obtain the plug-in estimate for $p(W=0 | R=1)$ and the bounds for $p(W=0 | R=0)$.
Finally, we do a grid search over the bounds interval, compute the full data distribution $p(R, X, W, C)$ for each value of $p(W=0 | R=0)$ via
(\ref{eq:kp_id}) and obtain $p(X^{(1)})$ using standard g-formula adjustment in MAR models.
The sensitivity analysis curve is shown in Fig~\ref{fig:clabsi}.
\footnote{
  This plot differs somewhat from the plot in the published version of the paper, due to a corrected data processing error.
  The conclusions on the underlying CLABSI rate were not substantially affected.
}

The values of $p(W=0 | R=0)$ consistent with the model show that inability to make a CLABSI determination is strongly associated with access to patient data via electronic health records.
For proxy EPIC, our obtained (sharp) bound for the nuisance parameter $p(W=0 | R=0)$ is $[0.88, 1]$,
yielding the estimated range of the true CLABSI rate to be $[0.69,0.79]$.
Compared with the baseline rate of $65\%$ under no-zero-inflation assumption, the rate's bound implies that anywhere from $4\%$ to $14\%$ of true CLABSI cases are undercounted, even in our patient cohort with a highly elevated CLABSI prevalence.

We have repeated the analysis using the proxy-augmented
ZI MAR model under assumptions {\bf A1$^{\dag}$} and {\bf A2$^{\dag}$}, shown in Fig.~\ref{fig:zi_mcar_mar} (c).
In this case, bounds for $p(W=0 | R=0, c)$ were obtained, for each value $c$. The narrowest bound $[0.985, 1.0]$ corresponds to adult patients receiving 
outpatient parenteral antimicrobial therapy (OPAT)
via a peripherally inserted central catheter (PICC).
On the other hand, the widest bound $[0.03, 1.0]$ corresponds to pediatric patients receiving chemotherapy via tunneled CVC, a type of catheter under the skin.
We performed a search of $10^6$ points over the polytope comprised of these bounds and find the estimated range of true CLABSI rate to be $[0.77, 0.93]$. That is, anywhere from 12\%-28\% of true CLABSI cases are undercounted.

All derived bounds for the true CLABSI rate were deemed to be medically plausible by our medical collaborators.

\vspace{-0.4cm}
\begin{figure}[ht]
  \centering
  \includegraphics[width=0.5\textwidth]{./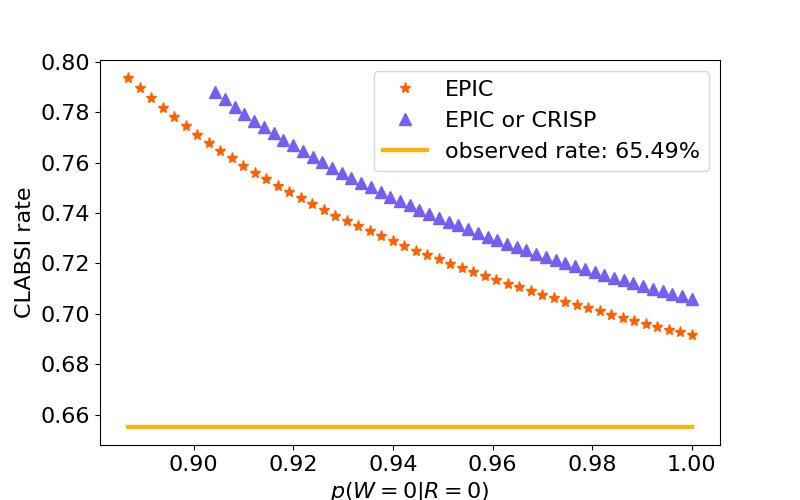}
  \caption{CLABSI rate consistent with model compatible distributions $p(W \mid R)$ under the ZI MAR model with assumptions {\bf A1$^*$}, and {\bf A2$^*$}.
  }
  \label{fig:clabsi}
\end{figure}

\section{Conclusion}

In this paper, we considered inference on data with inflated zeros as
a missing data problem where censored realizations are indicated by a $0$ rather than by a special token such as $``?"$.  This leads to a situation where the censoring indicator for a variable is unobserved any time the value $0$ is observed for such a variable.
We have shown that this significantly complicates the problem, and results in lack of identification even in simple missing data models such as MCAR.

To address this, we proposed a generalization of the approach in \citet{kuroki14measurement} which assumes the existence of an informative proxy for the censoring indicator.  We show that only some relationships between this proxy and the indicator are compatible with the model, derive analytic bounds for this relationship in a number of cases, and show that in some cases our bound is sharp.
Our bounds directly imply bounds on the zero inflated mean parameter.
We verified our results by deriving bounds numerically using the \texttt{autobounds} package described in \citet{duarte23automated}.
Finally, we applied our methods to CLABSI data, which exhibits significant zero inflation.  Our methods led to informative bounds on the true CLABSI rate, and provided a natural sensitivity analysis strategy.

Zero inflation is common in many types of data, particularly in electronic health records.  Our approach provides a principled strategy for deriving informative conclusions from such data without reliance on unrealistic modeling assumptions.

\begin{acknowledgements} 
This research is funded in part by ONR N00014-21-1-2820, NSF 2040804, NSF CAREER 1942239, NIH R01 AI127271-01A1, AHRQ R01 HS027819.
\end{acknowledgements}

\bibliography{references}

\newpage

\onecolumn

\title{Supplementary Material}
\maketitle


\appendix

\section{Proofs}

\subsection{Downstream identification}

\begin{propa}{\ref{prop:full-law-id}}
The full law $p\left(X^{(1)}, R, W, C, X\right)$ exhibiting zero inflation that is Markov relative to an m-DAG $\mathcal{G}$ is identified given $p(R, X, W, C)$ if and only if $\mathcal{G}$ does not contain edges of the form $X_i^{(1)} \rightarrow R_i$ (no self-censoring) and structures of the form $X_j^{(1)} \rightarrow R_i \leftarrow R_j$ (no colluders), and the positivity assumption holds. Moreover, the identifying functional for the full data law coincides with the functional given in \citet{malinsky2021semiparametric}.
\end{propa}
\begin{proof}
    Following the proof in \citet{bhattacharya20completeness}, the full law factorizes as
    \begin{equation}
        \begin{aligned}
            & p\left(R \mid X^{(1)}, C, W\right) \\
            & \quad = \frac{1}{Z} \times \prod_{k=1}^K p\left(R_k \mid R_{-k}=1, X^{(1)}, C, W\right) \\
            & \quad \times \prod_{R_k, R_l \in R} \operatorname{OR}\left(R_k, R_l \mid R_{-(k, l)}=1, X^{(1)}, C, W\right) \\
            & \quad \times \prod_{R_k, R_l, R_m \in R} f\left(R_k, R_l, R_m \mid R_{-(k, l, m)}=1, X^{(1)}, C, W\right) \\
            & \quad \times \prod_{R_k, R_l, R_m, R_n \in R} f\left(R_k, R_l, R_m, R_n \mid R_{-(k, l, m, n)}=1, X^{(1)}, C, W\right) \times \cdots \times f\left(R_1, \ldots, R_K \mid X^{(1)}, C, W\right),
        \end{aligned}
    \end{equation}
    where $R_{-k} = R \setminus \{R_k\}$, and similarly for $X^{(1)}$.
    \begin{itemize}
        \item No-colluder condition implies $R_{k} \Perp X^{(1)}_k \mid R_{-k}, X^{(1)}_{-k}, C, W$, so  $p\left(R_k \mid R_{-k}=1, X^{(1)}, C, W\right) = p\left(R_k \mid R_{-k}=1, X^{(1)}_{-k}, C, W\right)$. Hence these factors use only $R=1$ case of consistency.
        \item The 2-way odd-ratio $\operatorname{OR}\left(R_k, R_l \mid R_{-(k, l)}=1, X^{(1)}, C, W\right)$ is not a function of $\{X^{(1)}_{k}, X^{(1)}_l\}$. Therefore, only case $R=1$ of consistency is used.
        \item the 3-way interaction term $f\left(R_k, R_l, R_m, R_n \mid R_{-(k, l, m, n)}=1, X^{(1)}, C, W\right)$ is not a function of $\{X^{(1)}_{k}, X^{(1)}_l, X^{(1)}_m\}$. Therefore, only case $R=1$ of consistency is used. Similarly for any k-way interaction term.
    \end{itemize}
    Hence the proof in \citet{bhattacharya20completeness} applies to ZI problems, whose consistency differs missing data consistency only at $R=0$ case.
\end{proof}

\subsection{Non-identifiability proof}

\begin{lma}{\ref{lm:nonid}}
Given a ZI model associated with any m-DAG ${\cal G}$, both the target law $p(X^{(1)})$ and the full law $p(X^{(1)}, R, C)$ are non-parametrically non-identified.
\end{lma}

\begin{proof}

Let $\mathcal{G}$ be an m-DAG over $X^{(1)}, R, X, C$ and $\mathcal{P}$ its associated ZI-model. The m-DAG $\mathcal{G}_{\text{mcar}}$ obtained from $\mathcal{G}$ by deleting all edges while keeping $X^{(1)} \rightarrow X \leftarrow R$ defines a sub-model $\mathcal{P}_{\text{mcar}} \subseteq \mathcal{P}$ in which $X^{(1)}, R, C$ are jointly independent. If $p(X^{(1)})$ and $p(X^{(1)}, R, C)$ are non-parametrically non-identified in this sub-model, they are also non-identified in $\mathcal{P}$.

It suffices to prove non-identification for binary variables. The target is $p(X^{(1)}=1)$, and the observed marginals are
\begin{equation}
\begin{aligned}
p(X=1) p(c) &= p(X^{(1)}=1) p(R=1) p(c) \\
p(X=0) p(c) &= p(X^{(1)}=0) p(R=1)p(c) + p(R=0)p(c),
\end{aligned}
\end{equation}
using d-separation in $\mathcal{G}_{\text{mcar}}$. Since the second equation is just $p(c)$ minus the first, if the quantity
\begin{equation}
p(X=1) = p(X^{(1)}=1) p(R=1)
\end{equation}
is shown to be identical for 2 joint distributions in $\mathcal{P}_{\text{mcar}}$, the proof is finished.
Indeed, for any $p_1 \in \mathcal{P}_{\text{mcar}}$, we pick any real number $1 > m  \geq \max \{p_1(X^{(1)}=1), p_1(R=1)\}$ and construct $p_2 \in \mathcal{P}_{\text{mcar}}$ as follow
\begin{equation}
\begin{aligned}
p_2(X^{(1)}=1) &= \frac{1}{m} p_1(X^{(1)}=1); & p_2(R=1) &= m p_1(R=1); & p_2(C) = p_1(C).
\end{aligned}
\end{equation}
Evidently, the target laws are different $p_1(X^{(1)}) \neq p_2(X^{(1)})$, yet the observed marginals are the same $p_1(X, C) = p_2(X, C)$. Moreover, the full laws are also different
\begin{equation}
\begin{aligned}
p_2(X^{(1)} = 0) p_2(R = 1)
&= \left( 1 - \frac{1}{m} p_1(X^{(1)} = 1) \right) m p_1(R=1) \\
&\neq p_1(R=1) - p_1(X^{(1)} = 1)p_1(R=1) \\
&= p_1(X^{(1)} = 0) p_1(R = 1).
\end{aligned}
\end{equation}
Hence, $p(X^{(1)})$ and $p(X^{(1)}, R,C)$ are non-parametrically non-identified in $\mathcal{P}_{\text{mcar}}$.
\end{proof}

\newpage

\subsection{Examples of Compatibility Issue}

Consider the proxy-augmented ZI MCAR model, in which a joint distribution factorizes as

\begin{equation}
p(X^{(1)}, R, X, W) =  p(X^{(1)}, R, X) p(W \mid R).
\end{equation}

Here, the proxy assumptions insist that  
$p(W = 0 \mid R = 0) \neq p(W = 0 \mid R = 1)$.  
Therefore, any $p(W \mid R)$ obeys this inequality is said to be model compatible.  
Moreover, any joint distribution with $p(W \mid R)$ violating this inequality is outside  
of the model. Works investigate marginal models of hidden variable models often  
consider this type of compatibility.

In our paper, we mentioned another type of compatibility. Any joint distribution in the  
model yields a pair of observed law and proxy-indicator conditional distribution  
$(p(X, W), p(W \mid R))$. Obviously, both $p(X, W)$ and $p(W \mid R)$ produced  
this way are model compatible. Furthermore, they are compatible to one another,  
in the sense that there exists a model compatible joint distribution producing them.  
It is possible to construct an incompatible pair $(p(X, W), p(W \mid R))$  
whose components are both model compatible, because the joint distribution yielding  
them is not in the model. This is illustrated in the following simple examples.

\textbf{Example 1:}  

\begin{equation}
\underbrace{
    \begin{array}{c|c|c}
         &  X=0    &  X=1   \\
    \hline
    W=0  &  a      &  b     \\
    W=1  &  c      &  d   
    \end{array}
}_{p(W, X)}
\quad \quad \quad
\underbrace{
    \begin{array}{c|c|c}
         &  R=0  &  R=1  \\
    \hline
    W=0  &  1    &  0    \\
    W=1  &  0    &  1
    \end{array}
}_{p(W \mid R)}
\end{equation}

Since ZI MCAR does not impose any restriction on $p(W, X)$ in the binary case  
(see our proof for the bound in the ZI MCAR case), we can pick any number for  
$a,b,c,d$. In particular, let they be all non-zero. Then both $p(W, X)$ and  
$p(W \mid R)$ are model compatible. However, there isn't any valid $p(R, X)$  
(non-negative, summed to $1$) such that the Kuroki-Pearl equation holds  
$\mathbf{p}_{WX} = \mathbf{p}_{W \mid R}\mathbf{p}_{RX}$.  
Attempting to invert $\mathbf{p}_{W \mid R}$ in this equation will yield  
negative-valued $p(R, X)$.

\textbf{Example 2:}  

We choose a joint distribution (DGP) $p(X^{(1)}, R, X, W)$ Markov to the  
proxy-augmented ZI MCAR graph in Figure 2(a), from which we obtain the true  
$p(W, X)$, true $p(W \mid R)$, true $p(R, X)$.

We calculate $\hat{p}_1(R, X)$ via the matrix inversion equation using the true  
$p(W, X)$ and the true $p(W \mid R)$. The calculated $\hat{p}_1(R, X)$ is valid,  
and close to the true $p(R, X)$ up to floating point precision. This indicates the true  
$p(W, X)$ and the true $p(W \mid R)$ are compatible to one another.

We sample $100000$ data points $(W_i, X_i)$ from this DGP and estimate  
$\hat{p}(W, X)$ by counting, which is the MLE for binary data. Again, this estimation  
is in the model, since marginal model for $p(W, X)$ is saturated in the binary case.  
Then, we calculate $\hat{p}_2(R, X)$ via the matrix inversion equation, using the  
estimated $\hat{p}(W, X)$ and the true $p(W \mid R)$. This estimated  
$\hat{p}_2(R, X)$ has a negative value, which renders it invalid.

The code for this experiment could be found in the supplement of the paper.  
Its output is printed below.

\begin{verbatim}
True p(W,X):
 [[0.42643891  0.31215362]
  [0.14620603  0.11520144]]
True p(W|R):
 [[0.74919143  0.73043156]
  [0.25080857  0.26956844]]
True p(R,X):
 [[0.43502295  0.        ]
  [0.13762199  0.42735506]]
Computed p(R,X) via matrix inv using true p(W,X) and true p(W|R):
 [[0.43502294 -1.81411279e-16]
  [0.13762199  0.42735505]]

Estimated p(W,X):
 [[0.42883     0.30976]
  [0.14496     0.11645]]
Computed p(R,X) via matrix inv using estimated p(W,X) and true p(W|R):
 [[ 0.5178968 -0.08300896]
 [ 0.05589317  0.50921896]]
\end{verbatim}

\newpage

\subsection{ZI MCAR model and bounds}

In this section, $X^{(1)}$ and $C$ are categorical, while $R$ and $W$ are binary.

\subsubsection{Model definition}

Both the ZI MCAR model and ZI MAR model are Cartesian products, between $p(W \mid R)$ model and $p(X^{(1)}, R, X)$ model,
or $p(X^{(1)}, R, X, C)$ model, respectively.
Firstly, the adjustment formula establishes a 1-to-1 relation between the $p(X^{(1)}, R, X, C)$ model and the $p(R, X, C)$ model.
The constraint of the latter is fully understood.

\begin{lemma}
\label{lm:zi_marginal_constraints}C
For 1 variable ZI MCAR and ZI MAR model, the full law model for $p(X^{(1)}, R, X, C)$ is 1-to-1 to the model for $p(R, X, C)$ satisfying \textbf{Z}: $\forall x \neq 0, \forall c, p(X=x, R=0, C=c) = 0$.
\end{lemma}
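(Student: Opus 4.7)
The plan is to exhibit explicit maps in both directions and verify that they are mutual inverses. In the forward direction, marginalization of a full law over $X^{(1)}$ gives a $p(R, X, C)$; verifying \textbf{Z} is immediate because the deterministic kernel $p(X \mid R, X^{(1)})$ forces $X = 0$ whenever $R = 0$, so $p(X = x, R = 0, C = c) = 0$ for every $x \neq 0$. In the reverse direction, given $p(R, X, C)$ satisfying \textbf{Z} together with the positivity assumption $p(R = 1, C = c) > 0$, I would define $p(X^{(1)} \mid C) := p(X \mid R = 1, C)$ in the MAR case (and $p(X^{(1)}) := p(X \mid R = 1)$ in the MCAR case), extract $p(R, C)$ as a marginal of $p(R, X, C)$, and reassemble the full law by
\begin{align*}
p(X^{(1)}, R, X, C) := p(C)\, p(R \mid C)\, p(X^{(1)} \mid C)\, p(X \mid R, X^{(1)}),
\end{align*}
taking the last factor to be the deterministic ZI consistency kernel. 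By construction this factorization encodes the MAR (or MCAR) independence together with ZI consistency, so it lies in the full law model.

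Next I would verify that the two maps are mutual inverses. For reverse $\circ$ forward, start with a full law in the model: MAR gives $p(X^{(1)} \mid C) = p(X^{(1)} \mid R = 1, C)$, and consistency at $R = 1$ gives $p(X^{(1)} \mid R = 1, C) = p(X \mid R = 1, C)$, so the reconstruction recovers the correct $p(X^{(1)} \mid C)$ and therefore the original full law. For forward $\circ$ reverse, marginalize the reassembled full law over $X^{(1)}$: at $R = 1$ one obtains $p(R = 1, C = c)\, p(X = x \mid R = 1, C = c)$, which matches the input; at $R = 0$ the deterministic kernel places all mass on $X = 0$ with total $p(R = 0, C = c)$, again matching the input because the input satisfies \textbf{Z}.

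The main obstacle is simply pinning down why the reverse map is well-defined inside the model. This hinges on two points: positivity, which makes $p(X \mid R = 1, C)$ a legitimate conditional; and the MAR (or MCAR) assumption, which forces the never-observed stratum $p(X^{(1)} \mid R = 0, C)$ to equal $p(X^{(1)} \mid R = 1, C)$, so that defining $p(X^{(1)} \mid C)$ from the $R = 1$ stratum alone is both consistent and the only choice. Once these two points are articulated, the remainder is routine bookkeeping with the deterministic ZI kernel, and the 1-to-1 correspondence follows.
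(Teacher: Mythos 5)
Your proposal is correct and follows essentially the same route as the paper: the forward map is marginalization over $X^{(1)}$ (with \textbf{Z} forced by the deterministic ZI kernel), and the reverse map recovers $p(X^{(1)} \mid C)$ as $p(X \mid R=1, C)$ via the MAR/MCAR independence. Your explicit verification that the two maps are mutual inverses, and your observation that positivity of $p(R=1, C=c)$ is needed for the reverse map to be well-defined, are slightly more careful than the paper's terse two-bullet argument but add nothing structurally different.
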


\begin{proof}
  We only need to prove the lemma for ZI MAR model. 
  \begin{itemize}
    \item $\mathcal{P}$ includes all full laws $p(X^{(1)}, R, X, C)$ factorizing as
      \begin{equation}
          p(X^{(1)}, R, X, C) = p(X \mid X^{(1)}, R) p(X^{(1)} \mid C) p(R \mid C) p(C),
      \end{equation}
      with $p(X \mid X^{(1)}, R)$ denotes the deterministic ZI-consistency.
    
    \item $\mathcal{Q}$ includes all laws $p(R, X, C)$ factorizing as
      \begin{equation}
          p(R, X, C) = p(X \mid R, C) p(R \mid C) p(C).
      \end{equation}
      and obeying \textbf{Z}: $\forall c, \forall x \neq 0: p(X=x, R=0, C=c) = 0$.
  \end{itemize}

  These 2 models are 1-to-1:
  \begin{itemize}
    \item ($ \mathcal{P} \mapsto \mathcal{Q} $): This is just summation $p(R, X, C) = \sum_{X^{(1)}} p(X^{(1)}, R, X, C)$.
      The ZI-consistency implies \textbf{Z}.
    \item ($ \mathcal{Q} \mapsto \mathcal{P} $): By d-separation $p(X^{(1)} \mid C) = p(X \mid R=1, C)$.
  \end{itemize}
\end{proof}

In principle, asking whether $p(W \mid R)$ and $p(X, W, C)$ are compatible means
pointing out a full law $p(X^{(1)}, R, X, C) p(W \mid R)$ which yields both of them.
The above lemma allows us to reformulate this compatibility question by pointing out a joint $p(R, X, C) p(W \mid R)$ in the model.
This has the advantage of simplyfying the original compatibility question, i.e., the polynomial program describing it is of higher degree.
Moreover, we do not sacrify bound sharpness as we invoke this lemma, since the joint $p(R, X, C) p(W \mid R)$ satisfying \textbf{Z} is 1-to-1 to the full law.

\begin{lemma}
  The ZI MCAR model with categorical $X$ and binary $R, W$, which is Markov to the proxy-augmented 
  Fig.~\ref{fig:zi_mcar_mar} (a) (reproduced in Fig.~\ref{fig:zi_mcar_apx}) under proxy assumptions {\bf A1}, {\bf A2}, is described by
  \begin{equation}
  \begin{aligned}
    \mathcal{P} = \left\{
      \begin{array}{c|c}
        (\mathbf{p}_{W \mid R}, \mathbf{p}_{RX})
        &
        \begin{aligned}
          & & & \textstyle \mathbf{p}_{W \mid R} \geq 0, \: \forall r \left(\sum_w p_{w \mid r} = 1 \right), \: p_{w_0 \mid r_0} \neq p_{w_0 \mid r_1}, \\
          & & & \textstyle \mathbf{p}_{RX} \geq 0, \: \sum_{rx} p_{rx} = 1, \: \forall x \neq 0 (p_{r_0 x} = 0) \\
        \end{aligned}
      \end{array}
    \right\}.
  \end{aligned}
  \end{equation}
\end{lemma}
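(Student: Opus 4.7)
The plan is to decompose the joint model along the DAG factorization, eliminate $X^{(1)}$ using the immediately preceding $1$-to-$1$ reparametrization lemma, and then verify that the only surviving constraints on the two remaining factors are precisely those listed in the statement.

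First, I would read off from the m-DAG in Fig.~\ref{fig:zi_mcar_apx} (with $C$ absent) the factorization
\begin{equation*}
p(X^{(1)}, R, X, W) \;=\; p(X \mid X^{(1)}, R)\, p(X^{(1)})\, p(R)\, p(W \mid R),
\end{equation*}
where $p(X \mid X^{(1)}, R)$ is the deterministic ZI-consistency kernel. This cleanly separates the model into an independent product of a $p(W \mid R)$ factor and a $p(X^{(1)}, R, X)$ factor, and no cross-constraint links the two beyond each being a well-defined probability object.

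Second, I would apply the preceding lemma in the degenerate case of empty $C$ to replace the $p(X^{(1)}, R, X)$ factor by an equivalent $p(R, X)$ subject only to \textbf{Z}: $\forall x \neq 0,\; p_{r_0 x} = 0$. The MCAR independence $R \Perp X^{(1)}$ in the full law imposes no additional constraint on the marginal $p(R, X)$ beyond \textbf{Z}, since any such $p(R, X)$ can be lifted back to a full law with $p(X^{(1)}) := p(X \mid R = 1)$ and any $p(R)$ consistent with the marginal of $R$, in which $R$ and $X^{(1)}$ are independent by construction.

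Third, I would dispose of the proxy factor: because $W$ and $R$ are binary, $\mathbf{p}_{W \mid R}$ is a $2\times 2$ column-stochastic matrix, so a one-line determinant calculation shows that the invertibility in \textbf{A2} is equivalent to $p_{w_0 \mid r_0} \neq p_{w_0 \mid r_1}$. Assumption \textbf{A1} ($W \Perp X^{(1)} \mid R$) is exactly the d-separation enforced by the DAG and thus adds nothing new. Combining the two independent characterizations yields the claimed description of $\mathcal{P}$.

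I do not expect any serious obstacle. The only subtlety worth checking carefully is that the set on the right-hand side is genuinely a product, i.e.\ that every pair $(\mathbf{p}_{W \mid R}, \mathbf{p}_{RX})$ satisfying the listed constraints arises from some full law in the model; this is handled by the explicit lift described above, which is where I would spend the bulk of the writing to make sure no compatibility condition is silently lost.
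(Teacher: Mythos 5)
Your proposal is correct and follows essentially the same route as the paper's proof: factorize the joint so that $p(W\mid R)$ splits off as an independent factor, invoke the preceding $1$-to-$1$ lemma to replace the $p(X^{(1)},R,X)$ factor by $p(R,X)$ subject to \textbf{Z}, and reduce \textbf{A2} to $p_{w_0\mid r_0}\neq p_{w_0\mid r_1}$ via the $2\times 2$ determinant. The explicit lift $p(X^{(1)}):=p(X\mid R=1)$ that you spell out is exactly the inverse map used in the paper's Lemma on marginal constraints, so no new content is needed.
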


\begin{proof}
Due to $\bf A1$,
\begin{equation}
\begin{aligned}
    p(X^{(1)}, R, X, W) &= p(X^{(1)}, R, X) p(W \mid R).
\end{aligned}
\end{equation}
Therefore, model for $p(X^{(1)}, R, X, W)$ is a Cartesian product between the model for $p(X^{(1)}, R, X)$ and the model for $p(W \mid R)$.
The former is shown to be 1-to-1 to the model for $p(R, X)$ with restriction \textbf{Z}, by lemma~\ref{lm:zi_marginal_constraints}.
\begin{equation}
    \left\{\mathbf{p}_{R, X} \mid \mathbf{p}_{R, X} \geq 0, \sum_{x,r} p_{R, X} = 1, \forall x \neq 0 (p_{r_0, x} = 0) \right\}.
\end{equation}
While the latter is
\begin{equation}
  \left\{\mathbf{p}_{W \mid R} \mid \mathbf{p}_{W \mid R} \geq 0, \forall r \left(\sum_w p_{w \mid r} = 1 \right), \det \mathbf{p}_{W \mid R} \neq 0 \right\}.
\end{equation}
We just need to rewrite $\det \mathbf{p}_{W \mid R} \neq 0$. Since $W, R$ are binary
\begin{equation}
\begin{aligned}
\det \mathbf{p}_{W \mid R} &= p_{w_0 \mid r_0}p_{w_1 \mid r_1} - p_{w_0 \mid r_1} p_{w_1 \mid r_0} \\
&= p_{w_0 \mid r_0} (1 - p_{w_0 \mid r_1}) - p_{w_0 \mid r_1} (1 - p_{w_0 \mid r_0}) \\
&= p_{w_0 \mid r_0} - p_{w_0 \mid r_1} \neq 0.
\end{aligned}
\end{equation}
\end{proof}

\subsubsection{Bounds for ZI MCAR}

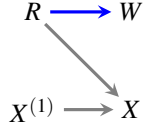
\begin{figure}[!htb]
  \centering
    \begin{tikzpicture}[>=stealth, node distance=1.3cm]
      \tikzstyle{obs} = [circle, inner sep=1pt]
      \tikzstyle{unobs} = [circle, inner sep=0pt]
      \begin{scope}[xshift=0cm]
        \path[->, very thick]
        node[unobs, inner sep=1pt] (x1) {$X^{(1)}$}
        node[obs, right of=x1, inner sep=1pt] (x) {$X$}
        node[unobs, above of=x1, inner sep=1pt] (r) {$R$}
        node[obs, right of=r, inner sep=1pt] (w) {$W$}

        (x1) edge[gray] (x)
        (r) edge[gray] (x)
        (r) edge[blue] (w)
        ;
      \end{scope}
    \end{tikzpicture}
    \caption{
      The graph considered in Theorem~\ref{thm:zi_mcar_bound}: proxy-augmented ZI MCAR model satisfying \textbf{A1} and \textbf{A2} (Fig.~\ref{fig:zi_mcar_mar} a in the main paper).
    }
    \label{fig:zi_mcar_apx}
\end{figure}

Before proving the bound theorem, we have the following useful lemma:
\begin{lemma}
  \label{lm:id_mcar}
  For the ZI MCAR model in Theorem~\ref{thm:zi_mcar_bound}, \textbf{Z} constraint $\forall x \neq 0 (q_{r_0 x}=0)$ is equivalent to $\forall x \neq 0, (q_{w_0 \mid r_1} = q_{w_0 \mid x})$.
  This means: (i) there is a marginal constraint $\forall x \neq 0 (q_{w_0 \mid x} =  q_{w_0 \mid x_1})$, and (ii) $q_{w_0 \mid r_1}$ is point-identified.
\end{lemma}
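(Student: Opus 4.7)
The plan is to prove the equivalence of the two statements directly by manipulating the marginalization $q(w_0, x) = \sum_r q(w_0 \mid r) q(r, x)$, using assumption \textbf{A1} to drop $X^{(1)}$ from the conditioning in $q(W\mid R)$ and assumption \textbf{A2} to convert a nonzero factor into a division. There are two directions to establish, and then the two stated consequences (i) and (ii) will fall out immediately.

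For the forward direction ($\mathbf{Z} \Rightarrow$ marginal equality), I would fix $x \neq 0$ and write
\begin{equation*}
q_{w_0, x} \;=\; q_{w_0 \mid r_0}\, q_{r_0, x} \;+\; q_{w_0 \mid r_1}\, q_{r_1, x}.
\end{equation*}
Under \textbf{Z}, the first summand vanishes, and since $q_{r_0,x}=0$ also gives $q_{r_1,x} = q_x$, dividing by $q_x > 0$ (positivity) yields $q_{w_0 \mid x} = q_{w_0 \mid r_1}$ as required.

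For the reverse direction, I start from the same expansion but now conditioned on $X=x$ for $x \neq 0$:
\begin{equation*}
q_{w_0 \mid x} \;=\; q_{w_0 \mid r_0}\, q_{r_0 \mid x} \;+\; q_{w_0 \mid r_1}\, q_{r_1 \mid x}.
\end{equation*}
Substituting the hypothesis $q_{w_0 \mid x} = q_{w_0 \mid r_1}$ and using $q_{r_1 \mid x} = 1 - q_{r_0 \mid x}$, I rearrange to obtain
\begin{equation*}
\bigl( q_{w_0 \mid r_1} - q_{w_0 \mid r_0} \bigr)\, q_{r_0 \mid x} \;=\; 0.
\end{equation*}
By \textbf{A2}, the leading factor is nonzero, so $q_{r_0 \mid x} = 0$, which gives $q_{r_0, x} = 0$; this is precisely \textbf{Z}. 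The two consequences now follow: since $q_{w_0 \mid r_1} = q_{w_0 \mid x}$ for every $x \neq 0$, specializing to $x = x_1$ point-identifies $q_{w_0 \mid r_1}$ as $p_{w_0 \mid x_1}$ (observed), yielding (ii); and transitivity across all $x \neq 0$ yields the marginal restriction (i).

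I do not expect any real obstacle here — everything is two-line linear algebra once the expansion is written down. The only care needed is to invoke positivity $q_x > 0$ (implicit in the bound theorem's hypotheses) when dividing, and to invoke \textbf{A2} cleanly in the reverse direction, noting that \textbf{A2} for binary $W, R$ reduces exactly to $q_{w_0 \mid r_0} \neq q_{w_0 \mid r_1}$, which is what the argument requires.
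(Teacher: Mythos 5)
Your proposal is correct and follows essentially the same argument as the paper: both directions expand $q_{w_0,x}=q_{w_0\mid r_0}q_{r_0,x}+q_{w_0\mid r_1}q_{r_1,x}$, use \textbf{Z} (resp.\ the hypothesis plus \textbf{A2}) to collapse the expression, and divide by $q_x>0$. Your explicit attention to positivity and to the reduction of \textbf{A2} to $q_{w_0\mid r_0}\neq q_{w_0\mid r_1}$ matches what the paper does implicitly; no gaps.
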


\begin{proof}

($\Rightarrow$) direction: Suppose $\forall x \neq 0 (q_{r_0 x}=0)$. Then for all $x \neq 0$
\begin{equation}
  \begin{aligned}
    q_{r_0 x} &= 0 & &\Leftrightarrow & q_{x} &= q_{r_0, x} + q_{r_1, x} = q_{r_1, x} & &\Leftrightarrow & q_{r_1 \mid x} &= 1.
  \end{aligned}
\end{equation}
Then, for all $x \neq 0$
\begin{equation}
  \begin{aligned}
    q_{w_0, x} &= q_{w_0 \mid r_0} q_{r_0, x} + q_{w_0 \mid r_1} q_{r_1, x}
                = 0 + q_{w_0 \mid r_1} q_{r_1, x} \\
    \Rightarrow
    q_{w_0 \mid x} &= q_{w_0 \mid r_1} q_{r_1 \mid x} = q_{w_0 \mid r_1}.
  \end{aligned}
\end{equation}

($\Leftarrow$) direction: Suppose $\forall x \neq 0 (q_{w_0 \mid r_1} = q_{w_0 \mid x})$.
Then, for all $x \neq 0$
\begin{equation}
  \begin{aligned}
    q_{w_0, x} &= q_{w_0 \mid r_0} q_{r_0, x} + q_{w_0 \mid r_1} q_{r_1, x} \\
    \Rightarrow
    q_{w_0 \mid x} &= q_{w_0 \mid r_0} q_{r_0 \mid x} + q_{w_0 \mid r_1} q_{r_1 \mid x} \\
    \Rightarrow
    q_{w_0 \mid r_1} &= q_{w_0 \mid r_0} q_{r_0 \mid x} + q_{w_0 \mid r_1} q_{r_1 \mid x} \\
    \Rightarrow
                0 &= (q_{w_0 \mid r_0} - q_{w_0 \mid r_1}) q_{r_0 \mid x}. \\
  \end{aligned}
\end{equation}
Since $q_{w_0 \mid r_0} \neq q_{w_0 \mid r_1}$, we must have $q_{r_0 \mid x} = 0 \Rightarrow q_{r_0, x} = 0$. This is true for all $x \neq 0$.

\end{proof}

Due to this lemma, an observed law $p(X, W)$ is consistent to the model if and only if $\forall x \neq 0, p(W=0, X=x) = p(W=0, X=1)$.
We also require positivity $\forall x, p(X=x) > 0$, so that $p(W \mid X)$ is well-defined.

\begin{thma}{\ref{thm:zi_mcar_bound}}
  Consider a ZI MCAR model in Fig.~\ref{fig:zi_mcar_mar} (a) (reproduced in Fig.~\ref{fig:zi_mcar_apx}) under proxy assumptions {\bf A1}, {\bf A2},
  with categorical $X$ and binary $R, W$.
  Given a consistent observed law $p(X, W)$ satisfying positivity assumption, $\forall x, p(x) > 0$,
  the set of compatible proxy-indicator conditionals $q(W \mid R)$ is given by
  \begin{align*}
    q_{w_0 \mid r_1} &= p_{w_0 \mid x_1}\\
    q_{w_0 \mid r_0} &\in
    \begin{cases}
    [ p_{w_0 \mid x_0}, 1] \text{ if } p_{w_0 \mid x_0} > p_{w_0 \mid x_1}\\
    [ 0, p_{w_0 \mid x_0}] \text{ if } p_{w_0 \mid x_0} < p_{w_0 \mid x_1}\\
    (0, 1) \setminus \{ p_{w_0 \mid x_0} \}\text{ if } p_{w_0 \mid x_0} = p_{w_0 \mid x_1}
    \end{cases}
  \end{align*}
  These bounds are sharp.
  Moreover, if $p_{w_0 \mid x_0} = p_{w_0 \mid x_1}$, $p(X, W)$ must satisfy $0 < p_{w_0 \mid x_0} < 1$,
  and zero inflation does not occur, i.e., $q(R=0)=0$.
\end{thma}

\begin{proof}

In the following, $q(\cdot)$ denotes an element in a model, while $p(\cdot)$ is derived from the given marginal $p(X, W)$.

In principle, any compatible $q(W \mid R)$ to $p(X, W)$ must be derived from some full joint distribution $q(X^{(1)}, R, X) q(W \mid R)$,
such that $q(X, W) = p(X, W)$.
Since the model for $q(X^{(1)}, R, X)$ is 1-to-1 to the model for $q(R, X)$ with restriction $\textbf{Z}$,
we can simplify this process by considering the marginal model for $q(R, X, W)$
\begin{equation}
\begin{aligned}
  \mathcal{P} = \left\{
    \begin{array}{c|c}
      (\mathbf{q}_{W \mid R}, \mathbf{q}_{RX})
      &
      \begin{aligned}
        & & & \textstyle \mathbf{q}_{W \mid R} \geq 0, \: \forall r \left(\sum_w q_{w \mid r} = 1 \right), \: q_{w_0 \mid r_0} \neq q_{w_0 \mid r_1}, \\
        & & & \textstyle \mathbf{q}_{RX} \geq 0, \: \sum_{rx} q_{rx} = 1, \: \forall x \neq 0 (q_{r_0 x} = 0) \\
      \end{aligned}
    \end{array}
  \right\}.
\end{aligned}
\end{equation}
The subset of $q(R, X, W)$ yielding the observed law $p(X, W)$ is
\begin{equation}
\begin{aligned}
\mathcal{Q}
&= \left\{(\mathbf{q}_{W \mid R}, \mathbf{q}_{RX}) \mid \: \mathbf{q}_{W \mid R} \mathbf{q}_{RX} = \mathbf{p}_{WX}, (\mathbf{q}_{W \mid R}, \mathbf{q}_{RX}) \in \mathcal{P} \right\}.
\end{aligned}
\end{equation}

\textbf{Polynomial program}

Since $\mathbf{q}_{W \mid R} \in \mathcal{P}$ is invertible, $\mathcal{Q}$ is the set of all pairs $(\mathbf{q}_{W \mid R}, \mathbf{q}_{RX})$ with $\mathbf{q}_{RX} = [\mathbf{q}_{W \mid R}]^{-1} \mathbf{p}_{WX}$
and $\mathbf{q}_{W \mid R} \in \mathcal{B}$,
\begin{equation}
\begin{aligned}
\mathcal{B}
  &=
  \left\{
    \begin{array}{c|c}
      \mathbf{q}_{W \mid R}
      &
      \begin{aligned}
        & & &\textstyle \mathbf{q}_{W \mid R} \geq 0, \: \forall r \left(\sum_w q_{w \mid r} = 1 \right), \: q_{w_0 \mid r_0} \neq q_{w_0 \mid r_1}, \\
        & & &\textstyle \mathbf{q}_{RX} \geq 0, \: \sum_{rx} q_{rx} = 1, \: \forall x \neq 0 (q_{r_0 x} = 0), \\
        & & &\textstyle \text{where } \mathbf{q}_{RX} = [\mathbf{q}_{W \mid R}]^{-1} \mathbf{p}_{WX}. \\
      \end{aligned}
    \end{array}
  \right\}.
\end{aligned}
\end{equation}
$\mathcal{B}$ is called the compatibility set of $q(W \mid R)$ w.r.t. $p(X, W)$.
As mentioned in the main paper, one can directly solve for $\mathcal{B}$ via the following polynomial program, where $\mathbf{q}_{RX}$ are slack variables.
\begin{equation}
  \begin{aligned}
    \max_{ q_{w_0 \mid r_0}} &&& \pm q_{w_0 \mid r_0}
    \\
    \text{s.t.}
    &&& \mathbf{q}_{W \mid R} \mathbf{q}_{RX} = \mathbf{p}_{WX}, \\
    &&& \textstyle \mathbf{q}_{W \mid R} \geq 0, \: \forall r (\sum_{w} q_{w|r} = 1), \: q_{w_0 \mid r_0} \neq q_{w_0 \mid r_1}, \\
    &&& \textstyle \mathbf{q}_{RX} \geq 0, \: \sum_{rx} q_{rx} = 1, \: \forall x \neq 0 (q_{r_0 x} = 0).
  \end{aligned}
\end{equation}
As we will show below, the constraint $\forall x \neq 0 (q_{r_0 x} = 0)$ is equivalent to $q_{w_0 \mid r_1} = p_{w_0 \mid x_1}$.
This is a quadratic program due to the first constraint.

\textbf{Linear program}

We will simplify $\mathcal{B}$. We do so by considering its superset and adding constraints to it.
Firstly, $\mathcal{B}$ could be parameterized by only 2 numbers, because its superset is
\begin{equation}
\begin{aligned}
  \left\{
    \begin{array}{c|c}
      \mathbf{q}_{W \mid R}
      &
      \mathbf{q}_{W \mid R} \geq 0, \: \forall r \left(\sum_w q_{w \mid r} = 1 \right)
    \end{array}
  \right\}
  =
  \left\{
    \begin{array}{c|c}
      \textstyle \mathbf{q}_{W \mid R} = \begin{pmatrix} q_{w_0 \mid r_0} & q_{w_0 \mid r_1} \\ 1-q_{w_0 \mid r_0} & 1-q_{w_0 \mid r_1} \end{pmatrix}
      &
      0 \leq q_{w_0 \mid r_0} \leq 1, \: 0 \leq q_{w_0 \mid r_1} \leq 1
    \end{array}
  \right\}.
\end{aligned}
\end{equation}

Secondly, when $q_{w_0 \mid r_0} \neq q_{w_0 \mid r_1}$, the 2-2 matrix $\mathbf{q}_{W \mid R}$ has inverse
\begin{equation}
\left[\mathbf{q}_{W \mid R}\right]^{-1} = \frac{1}{q_{w_0 \mid r_0} - q_{w_0 \mid r_1}} \underbrace{\begin{pmatrix} 1-q_{w_0 \mid r_1} & - q_{w_0 \mid r_1} \\ q_{w_0 \mid r_0}-1 & q_{w_0 \mid r_0} \end{pmatrix}}_{M}.
\end{equation}
Therefore, we can transform the quadratic constraint into the following equivalent linear constraints
\begin{equation}
\begin{aligned}
  &
  \left\{
    \begin{array}{c|c}
      \mathbf{q}_{W \mid R}
      &
      \textstyle
      \mathbf{q}_{W \mid R} \geq 0, \: \forall r \left(\sum_w q_{w \mid r} = 1 \right), \: q_{w_0 \mid r_0} \neq q_{w_0 \mid r_1}, \:
      \mathbf{q}_{RX} = [\mathbf{q}_{W \mid R}]^{-1} \mathbf{p}_{WX} \geq 0 \\
    \end{array}
  \right\}
  \\
  &=
  \bigcup_{s \in \{1,-1\}} \left\{
    \begin{array}{c|c}
      \mathbf{q}_{W \mid R}
      &
      \mathbf{q}_{W \mid R} \geq 0, \: \forall r \left(\sum_w q_{w \mid r} = 1 \right), \: sq_{w_0 \mid r_0} > sq_{w_0 \mid r_1}, \: sM \mathbf{p}_{WX} \geq 0
    \end{array}
  \right\}.
\end{aligned}
\end{equation}

Next, given $\mathbf{q}_{W \mid R} \mathbf{q}_{RX} = \mathbf{p}_{WX}$ where all terms are non-negative, $\sum_w q_{w \mid R} = 1$, $\sum_{w,x} p_{w,x} = 1$,
and $\mathbf{q}_{W \mid R}$ is invertible, then $\sum_{rx} q_{rx} = 1$. Hence, $\sum_{rx} q_{rx} = 1$ is a redundant constraint.
\textit{Proof:} entry $ij$-th $[\mathbf{p}_{WX}]_{ij} = q_{w_i \mid r_0} q_{r_0 x_j} + q_{w_i \mid r_1} q_{r_1 x_j}$.
Then $1 = \sum_{ij} [\mathbf{p}_{WX}]_{ij} = \sum_{j} \left( (\sum_{i} q_{w_i \mid r_0}) q_{r_0 x_j} + (\sum_{i} q_{w_i \mid r_1}) q_{r_1 x_j} \right) = \sum_{j} (q_{r_0 x_j} + q_{r_1 x_j})$.

Finally, lemma~\ref{lm:id_mcar} says $\forall x \neq 0 (q_{r_0 x}=0) \Leftrightarrow \forall x \neq 0 (q_{w_0 \mid r_1} = q_{w_0 \mid x})$,
and $ q_{w_0 \mid x} = p_{w_0 \mid x}$ in $\mathcal{B}$.
Note that this lemma also requires $p(X, W)$ to satisfy the marginal constraint $p_{w_0 \mid x} = p_{w_0 \mid x_1}$.
Therefore, the constraint $\forall x \neq 0 (q_{r_0 x}=0) \Leftrightarrow q_{w_0 \mid r_1} = p_{w_0 \mid x_1}$.

Putting together, we can write $\mathcal{B}$ as
\begin{equation}
\begin{aligned}
\mathcal{B}
=
\bigcup_{s \in \{1,-1\}} \left\{
  \begin{array}{c | c}
    \mathbf{q}_{W \mid R} = \begin{pmatrix} q_{w_0 \mid r_0} & q_{w_0 \mid r_1} \\ 1-q_{w_0 \mid r_0} & 1-q_{w_0 \mid r_1} \end{pmatrix}
    &
    \begin{aligned}
      &&& \textstyle s\begin{pmatrix} 1-q_{w_0 \mid r_1} & - q_{w_0 \mid r_1} \\ q_{w_0 \mid r_0}-1 & q_{w_0 \mid r_0} \end{pmatrix}  \mathbf{p}_{WX}  \geq \mathbf{0}, \\
      &&& \textstyle s \cdot q_{w_0 \mid r_0} > s \cdot q_{w_0 \mid r_1}, 0 \leq q_{w_0 \mid r_0} \leq 1, \\
      &&& \textstyle q_{w_0 \mid r_1} = p_{w_0 \mid x_1}
    \end{aligned}
  \end{array}
\right\}.
\end{aligned}
\end{equation}
or,
\begin{equation}
\begin{aligned}
&\mathcal{B}
=
\left\{
  \begin{array}{c | c}
    \mathbf{q}_{W \mid R} = \begin{pmatrix} q_{w_0 \mid r_0} & q_{w_0 \mid r_1} \\ 1-q_{w_0 \mid r_0} & 1-q_{w_0 \mid r_1} \end{pmatrix}
    &
    q_{w_0 \mid r_1} = p_{w_0 \mid x_1}, \:
    q_{w_0 \mid r_0} \in \mathcal{B}_{w_0 \mid r_0}
  \end{array}
\right\},
\\
&\mathcal{B}_{w_0 \mid r_0}
=
\bigcup_{s \in \{1,-1\}} \mathcal{B}_{w_0 \mid r_0}^s
=
\bigcup_{s \in \{1,-1\}} \left\{
  \begin{array}{c | c}
    q_{w_0 \mid r_0}
    &
    \begin{aligned}
      &&& \textstyle s\begin{pmatrix} 1-q_{w_0 \mid r_1} & - q_{w_0 \mid r_1} \\ q_{w_0 \mid r_0}-1 & q_{w_0 \mid r_0} \end{pmatrix}  \mathbf{p}_{WX}  \geq \mathbf{0}, \\
      &&& \textstyle s \cdot q_{w_0 \mid r_0} > s \cdot q_{w_0 \mid r_1}, \: 0 \leq q_{w_0 \mid r_0} \leq 1, \\
      &&& \textstyle q_{w_0 \mid r_1} = p_{w_0 \mid x_1}
    \end{aligned}
  \end{array}
\right\}.
\end{aligned}
\end{equation}
The set $\mathcal{B}_{w_0 \mid r_0}$ is called the \textbf{compatible set} of $q_{w_0 \mid r_0}$ w.r.t. $p(X, W)$.
As will be shown, this is an interval in $[0,1]$, hence the name {\bf compatibility bound}.

To find $\mathcal{B}_{w_0 \mid r_0}$, we will find each $\mathcal{B}^{s}_{w_0 \mid r_0}$ and take their union.
Each $\mathcal{B}_{w_0 \mid r_0}^s$ could be numerically computed by solving the 2 linear programs
\begin{equation}
  \begin{aligned}
    \max_{ q_{w_0 \mid r_0} } &&& \pm q_{w_0 \mid r_0} \\
    \text{s.t.}
      &&& s \cdot \begin{pmatrix} 1-q_{w_0 \mid r_1} & - q_{w_0 \mid r_1} \\ q_{w_0 \mid r_0}-1 & q_{w_0 \mid r_0} \end{pmatrix}  \mathbf{p}_{WX}  \geq \mathbf{0}, \\
      &&& s \cdot q_{w_0 \mid r_0} > s \cdot q_{w_0 \mid r_1}, \: 0 \leq q_{w_0 \mid r_0} \leq 1, \\
      &&& q_{w_0 \mid r_1} = p_{w_0 \mid x_1}.
  \end{aligned}
\end{equation}
These problems are linear program as $q_{w_0 \mid r_0}$ is the only unknown and all constraints are linear.
The set $\mathcal{B}_{w_0 \mid r_0}^s$ is the interval whose endpoints are 2 numbers returned by these programs.

\textbf{Solutions to linear programs}

\textit{Solving $\mathcal{B}_{w_0 \mid r_0}^{s=1}$:} We expand the matrix multiplication equation
\begin{equation}
\begin{aligned}
                 & & p_{w_0, x_0} \left( 1 - q_{w_0 \mid r_1} \right) - p_{w_1, x_0} q_{w_0 \mid r_1} & \geq 0
                 & &\Leftrightarrow & & p_{w_0 \mid x_0} p_{w_1 \mid x_1} - p_{w_1 \mid x_0} p_{w_0 \mid x_1} \geq 0 \\
  \forall x \neq 0,& & p_{w_0, x} \left( 1 - q_{w_0 \mid r_1} \right) - p_{w_1, x} q_{w_0 \mid r_1} & \geq 0
                 & &\Leftrightarrow &  & \forall x \neq 0, p_{w_0, x} p_{w_1 \mid x_1} - p_{w_1, x} p_{w_0 \mid x_1} = 0 \\
                 & & p_{w_0, x_0} \left( q_{w_0 \mid r_0} - 1 \right) + p_{w_1, x_0} q_{w_0 \mid r_0} & \geq 0
                 & &\Leftrightarrow & & q_{w_0 \mid r_0} \geq \frac{p_{w_0, x_0}}{ p_{w_0, x_0} + p_{w_1, x_0} } = p_{w_0 \mid x_0} \\
  \forall x \neq 0,& & p_{w_0, x} \left( q_{w_0 \mid r_0} - 1 \right) + p_{w_1, x} q_{w_0 \mid r_0} & \geq 0
                 & &\Leftrightarrow & & \forall x \neq 0, q_{w_0 \mid r_0} \geq \frac{p_{w_0, x}}{ p_{w_0, x} + p_{w_1, x} } = p_{w_0 \mid x}. \\
\end{aligned}
\end{equation}

In the derivations above, we use the positivity assumption $\forall x (p(X=x) > 0)$.
At the very least, we assume there is zeros, i.e., $p(X=0)>0$, otherwise the problem does not make sense.
If positivity is violated, e.g., $\exists x \neq 0, p(X=x) = 0$, one can show that $p_{w_0, x} = p_{w_1, x} = 0$,
and hence this value $x$ does not place any restriction on $q_{w_0 \mid r_0}$, and can be ignored in the following discussion.

The first equation shows that the $s=1$ case has no solution if $p_{w_0 \mid x_0} p_{w_1 \mid x_1} - p_{w_1 \mid x_0} p_{w_0 \mid x_1} < 0$.
When the LHS is non-negative, the feasible region $\mathcal{B}_{w_0 \mid r_0}^{s=1}$ is $\max_x p_{w_0 \mid x} \leq q_{w_0 \mid r_0} \leq 1$.
We can further split into 2 cases, and note that $q_{w_0 \mid r_0} > q_{w_0 \mid r_1}$, per $s=1$.
\begin{enumerate}
  \item If $p_{w_0 \mid x_0} p_{w_1 \mid x_1} - p_{w_1 \mid x_0} p_{w_0 \mid x_1} = 0 \Leftrightarrow p_{w_0 \mid x_0} = p_{w_0 \mid x_1}$,
  which is true for all values in $[0,1]$.
  Then $p_{w_0 \mid x_0} < q_{w_0 \mid r_0} \leq 1$. For this to make sense, we must have $p_{w_0 \mid x_0} < 1$.
  \item If $p_{w_0 \mid x_0} p_{w_1 \mid x_1} - p_{w_1 \mid x_0} p_{w_0 \mid x_1} > 0 \Leftrightarrow p_{w_0 \mid x_0} > p_{w_0 \mid x_1}$,
  which is true for all $0 \leq p_{w_0 \mid x_1} < 1$.
  Then $p_{w_0 \mid x_0} \leq q_{w_0 \mid r_0} \leq 1$.
\end{enumerate}
The bounds are sharp because they are the feasible regions $\mathcal{B}_{w_0 \mid r_0}^{s=1}$.

\textit{Solving $\mathcal{B}_{w_0 \mid r_0}^{s=-1}$:} Similarly
\begin{equation}
\begin{aligned}
                   & & p_{w_0, x_0} \left( 1 - q_{w_0 \mid r_1} \right) - p_{w_1, x_0} q_{w_0 \mid r_1} & \leq 0
                   & &\Leftrightarrow & & p_{w_0 \mid x_0} p_{w_1 \mid x_1} - p_{w_1 \mid x_0} p_{w_0 \mid x_1} \leq 0 \\
    \forall x \neq 0,& & p_{w_0, x} \left( 1 - q_{w_0 \mid r_1} \right) - p_{w_1, x} q_{w_0 \mid r_1} & \leq 0
                   & &\Leftrightarrow &  & \forall x \neq 0, p_{w_0, x} p_{w_1 \mid x_1} - p_{w_1, x} p_{w_0 \mid x_1} = 0 \\
                   & & p_{w_0, x_0} \left( q_{w_0 \mid r_0} - 1 \right) + p_{w_1, x_0} q_{w_0 \mid r_0} & \leq 0
                   & &\Leftrightarrow & & q_{w_0 \mid r_0} \leq \frac{p_{w_0, x_0}}{ p_{w_0, x_0} + p_{w_1, x_0} } = p_{w_0 \mid x_0} \\
    \forall x \neq 0,& & p_{w_0, x} \left( q_{w_0 \mid r_0} - 1 \right) + p_{w_1, x} q_{w_0 \mid r_0} & \leq 0
                   & &\Leftrightarrow & & \forall x \neq 0, q_{w_0 \mid r_0} \leq \frac{p_{w_0, x}}{ p_{w_0, x} + p_{w_1, x} } = p_{w_0 \mid x}. \\
\end{aligned}
\end{equation}
  
The first equation shows that the $s=1$ case has no solution if $p_{w_0 \mid x_0} p_{w_1 \mid x_1} - p_{w_1 \mid x_0} p_{w_0 \mid x_1} > 0$.
When the LHS is non-positive, the feasible region $\mathcal{B}_{w_0 \mid r_0}^{s=-1}$ is $0 \leq q_{w_0 \mid r_0} \leq \min_x p_{w_0 \mid x}$.
We can further split into 2 cases, and note that $q_{w_0 \mid r_0} < q_{w_0 \mid r_1} $, per $s=-1$.
\begin{enumerate}
  \item If $p_{w_0 \mid x_0} p_{w_1 \mid x_1} - p_{w_1 \mid x_0} p_{w_0 \mid x_1} = 0 \Leftrightarrow p_{w_0 \mid x_0} = p_{w_0 \mid x_1}$,
  which is true for all values in $[0,1]$.
  Then $0 \leq q_{w_0 \mid r_0} < p_{w_0 \mid x_0}$. For this to make sense, we must have $p_{w_0 \mid x_0} > 0$.
  \item If $p_{w_0 \mid x_0} p_{w_1 \mid x_1} - p_{w_1 \mid x_0} p_{w_0 \mid x_1} < 0 \Leftrightarrow p_{w_0 \mid x_0} < p_{w_0 \mid x_1}$.
  which is true for all $0 < p_{w_0 \mid x_1} \leq 1$.
  Then $0 \leq q_{w_0 \mid r_0} \leq p_{w_0 \mid x_0}$.
        
\end{enumerate}
The bounds are sharp because they are the feasible regions $\mathcal{B}_{w_0 \mid r_0}^{s=-1}$.

\textbf{Result}

Combine these results to get the compatibility bound $\mathcal{B}_{w_0 \mid r_0} = \bigcup_{s \in \{1,-1\}} \mathcal{B}_{w_0 \mid r_0}^s$.
The bounds are sharp.
\begin{align*}
    q_{w_0 \mid r_1} &= p_{w_0 \mid x_1}\\
    q_{w_0 \mid r_0} &\in
    \begin{cases}
    [ p_{w_0 \mid x_0}, 1] \text{ if }p_{w_0 \mid x_0} > p_{w_0 \mid x_1}\\
    [ 0, p_{w_0 \mid x_0}] \text{ if }p_{w_0 \mid x_0} < p_{w_0 \mid x_1}\\
    (0, 1) \setminus \{ p_{w_0 \mid x_0} \}\text{ if } 0 < p_{w_0 \mid x_0} = p_{w_0 \mid x_1} < 1
    \end{cases}
\end{align*}
The situations $p_{w_0 \mid x_0} = p_{w_0 \mid x_1} \in \{0,1\}$ are not allowed by the model.
Moreover, if $p_{w_0 \mid x_0} = p_{w_0 \mid x_1}$ then $q(R=0)=0$, i.e., zero inflation does not occur.
\textit{Proof}:
\begin{equation}
  \begin{aligned}
    p_{w_0 \mid x_0} & = q_{w_0 \mid r_0} q_{r_0 \mid x_0} + q_{w_0 \mid r_1} q_{r_1 \mid  x_0} \\
    p_{w_0 \mid x_1} & = q_{w_0 \mid r_1} \quad (\text{id})
  \end{aligned}
\end{equation}
Therefore, subtracting both sides,
\begin{equation}
  0 = p_{w_0 \mid x_0} - p_{w_0 \mid x_1} = q_{w_0 \mid r_0} q_{r_0 \mid x_0} - q_{w_0 \mid r_1} q_{r_0 \mid x_0} = \left(q_{w_0 \mid r_0}  - q_{w_0 \mid r_1} \right) q_{r_0 \mid x_0}.
\end{equation}
Due to proxy assumption \textbf{A2}: $q_{w_0 \mid r_0} \neq q_{w_0 \mid r_1}$. Then the LHS equals $0$ if and only if $q_{r_0 \mid x_0} = 0$.
Moreover, \textbf{Z} implies $\forall x \neq 0, q_{r_0 \mid x} = 0$. Then $q_{r_0} = \sum_x q_{r_0 \mid x} p_{x} = 0$.

\end{proof}


\pagebreak
\subsection{ZI MAR proofs}


\begin{figure}[!htb]
  \centering
  \begin{tikzpicture}
    \tikzstyle{obs} = [circle, inner sep=1pt]
    \tikzstyle{unobs} = [circle, inner sep=0pt]
    \begin{scope}
      \path[->, very thick]
      node[unobs, inner sep=1pt] (x1) {$X^{(1)}$}
      node[obs, right of=x1, inner sep=1pt] (x) {$X$}
      node[unobs, above of=x1, inner sep=1pt] (r) {$R$}
      node[obs, below of=r, xshift=-0.6cm, yshift=+0.65cm, inner sep=1pt] (c) {$C$}
      node[obs, right of=r, inner sep=1pt] (w) {$W$}

      (x1) edge[gray] (x)
      (r) edge[gray] (x)
      (r) edge[blue] (w)
      (c) edge[blue] (r)
      (c) edge[blue] (x1)
      (c) edge[blue] (w)
      ;
    \end{scope}
  \end{tikzpicture}
  \caption{
    The graph considered in Theorem~\ref{thm:zi_mar_bound_1}: proxy-augmented ZI MAR model satisfying \textbf{A1}$^{\dag}$ and \textbf{A2}$^{\dag}$ (Fig.~\ref{fig:zi_mcar_mar} (b) in the main paper).
  }
  \label{fig:zi_mar_apx_1}
\end{figure}

\begin{thma}{\ref{thm:zi_mar_bound_1}}
  Consider a ZI MAR model in Fig.~\ref{fig:zi_mcar_mar} (b) (reproduced in Fig.~\ref{fig:zi_mar_apx_1})
  under proxy assumptions {\bf A1$^{\dag}$} and {\bf A2$^{\dag}$}, 
  with categorical $X, C$ and binary $R, W$.
  Given a consistent observed law $p(X, W, C)$ satisfying positivity assumption, $\forall x, c, p(x,c) > 0$,
  the set of compatible proxy-indicator conditional distributions $q(W \mid R,C)$ is given by, for each  value $c$,
  \begin{align*}
      q_{w_0 \mid r_1, c} &= p_{w_0 \mid x_1, c}\\
      q_{w_0 \mid r_0, c} &\in
      \begin{cases}
      [ p_{w_0 \mid x_0, c}, 1] \text{ if } p_{w_0 \mid x_0, c} > p_{w_0 \mid x_1, c}\\
      [ 0, p_{w_0 \mid x_0, c}] \text{ if } p_{w_0 \mid x_0, c} < p_{w_0 \mid x_1, c}\\
      (0, 1) \setminus \{ p_{w_0, \mid x_0, c} \}\text{ if } p_{w_0 \mid x_0, c} = p_{w_0 \mid x_1, c}
      \end{cases}
  \end{align*}
  These bounds are sharp.
  Moreover, if $p_{w_0 \mid x_0, c} = p_{w_0 \mid x_1, c}$, $p(X, W, C)$ must satisfy $0 < p_{w_0 \mid x_0, c} < 1$,
  and zero inflation does not occur for stratum $C=c$, i.e., $q(R=0 \mid c)=0$.
\end{thma}

\begin{proof}


\item
\textbf{Model definition.}

We assume $C$ is a cardinal variable, taking values in a finite set $\mathcal{C}$. Any joint distribution in this ZI MAR model is
\begin{equation}
q \left( X^{(1)}, R, X, W, C \right) = q \left( X^{(1)}, R, X, W \mid C \right) p(C).
\end{equation}
Since the Markov factors are variationally independent, the ZI MAR model is a Cartesian product
\begin{equation}
\begin{aligned}
\mathcal{P}^{(1)}_{\text{ZI MAR}}
&=
\left( \textstyle \bigotimes_{c \in \mathcal{C}} \mathcal{P}^{(1)}_{\text{ZI MCAR}} (c) \right) \otimes \mathcal{P}_{C}
\\
\mathcal{P}_{C} &=\{q(C)\},
\\
\mathcal{P}^{(1)}_{\text{ZI MCAR}} (c) &= \left\{q(X^{(1)}, R, X, W \mid c) \mid \textbf{A1}, \textbf{A2} \right\} \\
\end{aligned}
\end{equation}
Note how constraints \textbf{A1}$^\dag$, \textbf{A2}$^\dag$ are equivalent to imposing \textbf{A1}, \textbf{A2} to each stratum $C=c$.
Notation: (i) $\mathcal{P}_{C} = \{q(C)\}$ means $\mathcal{P}_{C}$ is a non-parametric model contains all probability distribution $q(C)$, and (ii) probability constraints are assumed to hold.

In this product, $\mathcal{P}^{(1)}_{\text{ZI MCAR}} (c)$ for all $c$ are the same ZI MCAR model described in Theorem~\ref{thm:zi_mcar_bound}, repeated $|\mathcal{C}|$ times.
The value $c$ is not a parameter of the model $\mathcal{P}^{(1)}_{\text{ZI MCAR}} (c)$, but a constant.
Its only purpose is for the sake of book-keeping when constructing the joint distribution in $\mathcal{P}^{(1)}_{\text{ZI MAR}}$.
For MAR, standard adjustment method point identifies $q(X^{(1)}, R, X, W)$ as a functional of $q(R, X, W)$.
Therefore, as shown in lemma \ref{lm:zi_marginal_constraints} the set $\mathcal{P}^{(1)}_{\text{ZI MCAR}} (c)$ is 1-to-1 to the set
$\mathcal{P}_{\text{ZI MCAR}}(c) = \left\{q(R, X, W \mid c) \mid {\bf Z, A1, A2}\right\}$.
Hence, we are interested in the marginal model
\begin{equation}
\begin{aligned}
\mathcal{P}_{\text{ZI MAR}}
&=
\left( \textstyle \bigotimes_{c \in \mathcal{C}} \mathcal{P}_{\text{ZI MCAR}} (c) \right) \otimes \mathcal{P}_{C}. \\
\end{aligned}
\end{equation}

\textbf{Finding compatible set.}

Given an observed law $\mathbf{p}_{WX C}$, we want to find the compatible set w.r.t. this law
\begin{equation}
\begin{aligned}
\mathcal{Q}
&= \left\{ \mathbf{q}_{XRWC} \mid \mathbf{q}_{XRWC} \in \mathcal{P}_{\text{ZI MAR}}, \: \forall c \in \mathcal{C} \left( \mathbf{q}_{W \mid R c} \mathbf{q}_{RX \mid c} = \mathbf{p}_{WX \mid c} \right), \: \forall c \in \mathcal{C} \left( q(c) = p(c) \right) \right\}. \\
\end{aligned}
\end{equation}

Geometrically speaking, this set is the intersection of our model $\mathcal{P}_{\text{ZI MAR}}$ with the constraint set $\mathcal{E}$, which is itself a Cartesian product,
\begin{equation}
\begin{aligned}
\mathcal{E}
&=
\left\{ \mathbf{q}_{XRWC} \mid \forall c \in \mathcal{C} \left( \mathbf{q}_{W \mid R c} \mathbf{q}_{RX \mid c} = \mathbf{p}_{WX \mid c} \right), \text{ and } \forall c \in \mathcal{C} \left( q(c) = p(c) \right) \right\} \\
&=
\textstyle \bigotimes_{c \in \mathcal{C}} \left\{ \mathbf{q}_{XRW \mid c} \mid \mathbf{q}_{W \mid R c} \mathbf{q}_{RX \mid c} = \mathbf{p}_{WX \mid c} \right\} \otimes \left\{ p(c) \right\}. \\
\end{aligned}
\end{equation}
Here we abuse notation $\left\{ p(c) \right\}$ to mean the set with 1 element - the observed law $p(C)$, which is not the model $\mathcal{P}_C$.

Since the constraint $\mathbf{q}_{W \mid R c} \mathbf{q}_{RX \mid c} = \mathbf{p}_{WX \mid c}$ only concerns $q(R, X, W \mid c)$ and does not concern other $q(R, X, W \mid c')$ in any way, we push each constraint to the corresponding $\mathcal{P}_{\text{ZI MCAR}} (c)$. In other words, we will proceed to find the ZI MCAR compatibility bound for each level $c$, as shown below. Mathematically, as Cartesian product could be written as intersection: If $A,C \subseteq U$ and $B,D \subseteq V$, then
\begin{equation}
\begin{aligned}
(A \otimes B) \cap (C \otimes D)
&= (A \otimes V) \cap (U \otimes B) \cap (C \otimes V) \cap (U \otimes D)
\\
&= (A \cap C \otimes V) \cap (U \otimes (B \cap D))
\\
&= (A \cap C) \otimes (B \cap D).
\end{aligned}
\end{equation}

We could transform
\begin{equation}
\begin{aligned}
  \mathcal{Q}
  &=
  \mathcal{P}_{\text{ZI MAR}} \cap \mathcal{E}
  =
  \textstyle \bigg( \left( \bigotimes_{c \in \mathcal{C}} \mathcal{P}_{\text{ZI MCAR}} (c) \right) \otimes \mathcal{P}_{C} \bigg)
  \cap
  \bigg( \bigotimes_{c \in \mathcal{C}} \left\{ \mathbf{q}_{XRW \mid c} \mid \mathbf{q}_{W \mid R c} \mathbf{q}_{RX \mid c} = \mathbf{p}_{WX \mid c} \right\} \otimes \left\{ p(c) \right\} \bigg)
  \\
  &=
  \textstyle \bigg( \left( \bigotimes_{c \in \mathcal{C}} \mathcal{P}_{\text{ZI MCAR}} (c) \right)
  \cap \left( \bigotimes_{c \in \mathcal{C}} \left\{ \mathbf{q}_{XRW \mid c} \mid \mathbf{q}_{W \mid R c} \mathbf{q}_{RX \mid c} = \mathbf{p}_{WX \mid c} \right\} \right) \bigg)
  \otimes \left\{ p(c) \right\}
  \\
  &=
  \textstyle \bigg( \bigotimes_{c \in \mathcal{C}} \left( \mathcal{P}_{\text{ZI MCAR}} (c)
  \cap \left\{ \mathbf{q}_{XRW \mid c} \mid \mathbf{q}_{W \mid R c} \mathbf{q}_{RX \mid c} = \mathbf{p}_{WX \mid c} \right\} \right) \bigg)
  \otimes \left\{ p(c) \right\}
  \\
  &=
  \left(\textstyle \bigotimes_{c \in \mathcal{C}}  \mathcal{Q}_{\text{ZI MCAR}}(c) \right) \otimes \left\{ p(c) \right\},
\end{aligned}
\end{equation}
where
\begin{equation}
\begin{aligned}
  \mathcal{Q}_{\text{ZI MCAR}}(c)
  &=
  \left\{ q(R, X, W \mid c) \mid {\bf Z, A1, A2}, \: \mathbf{q}_{W \mid R c} \mathbf{q}_{RX \mid c} = \mathbf{p}_{WX \mid c} \right\}.
\end{aligned}
\end{equation}
This is exactly the set $\mathcal{Q}$ described in Theorem~\ref{thm:zi_mcar_bound}.
Therefore, this equation suggests the application of Theorem~\ref{thm:zi_mcar_bound} to each stratum $C=c$.
First, there are marginal constraints: $\forall c, \forall x \neq 0, p_{w_0 \mid x, c} = p_{w_0 \mid x_1, c}$.
Second,
\begin{equation}
\begin{aligned}
  \mathcal{Q}_{\text{ZI MCAR}}(c)
  &=
  \left\{
      (\mathbf{q}_{W \mid R, c}, \mathbf{q}_{RX \mid c}) \mid
      \mathbf{q}_{RX \mid c} = [\mathbf{q}_{W \mid R \mid c}]^{-1} \mathbf{p}_{WX \mid c}, \mathbf{q}_{W \mid R} \in \mathcal{B},
  \right\}
\end{aligned}
\end{equation}
where $\mathcal{B}$ contains stochastic matrix $\mathbf{q}_{W \mid R c}$ satisfying
\begin{align*}
    q_{w_0 \mid r_1, c} &= p_{w_0 \mid x_1, c}\\
    q_{w_0 \mid r_0, c} &\in
    \begin{cases}
    [ p_{w_0 \mid x_0, c}, 1] \text{ if }p_{w_0 \mid x_0, c} > p_{w_0 \mid x_1, c}\\
    [ 0, p_{w_0 \mid x_0, c}] \text{ if }p_{w_0 \mid x_0, c} < p_{w_0 \mid x_1, c}\\
    (0, 1) \setminus \{ p_{w_0 \mid x_0, c} \}\text{ if } 0 < p_{w_0 \mid x_0, c} = p_{w_0 \mid x_1, c} < 1.
    \end{cases}
\end{align*}
Moreover, if $p_{w_0 \mid x_0, c} = p_{w_0 \mid x_1, c}$ then $0 < p_{w_0 \mid x_0, c} < 1$ is an additional condition,
and zero inflation does not occur for stratum $C=c$, i.e., $q(R=0 \mid c)=0$.
Since the compatibility set $\mathcal{Q}$ in this case is a Cartesian product of compatibility sets described in Theorem~\ref{thm:zi_mcar_bound}, which is sharp, $\mathcal{Q}$ is sharp.
\end{proof}

\pagebreak

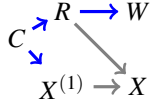
\begin{figure}[!htb]
  \centering
  \begin{tikzpicture}
    \tikzstyle{obs} = [circle, inner sep=1pt]
    \tikzstyle{unobs} = [circle, inner sep=0pt]
    \begin{scope}
      \path[->, very thick]
      node[unobs, inner sep=1pt] (x1) {$X^{(1)}$}
      node[obs, right of=x1, inner sep=1pt] (x) {$X$}
      node[unobs, above of=x1, inner sep=1pt] (r) {$R$}
      node[obs, below of=r, xshift=-0.6cm, yshift=+0.65cm, inner sep=1pt] (c) {$C$}
      node[obs, right of=r, inner sep=1pt] (w) {$W$}

      (x1) edge[gray] (x)
      (r) edge[gray] (x)
      (r) edge[blue] (w)
      (c) edge[blue] (r)
      (c) edge[blue] (x1)
      ;
    \end{scope}      
  \end{tikzpicture}
  \caption{
    The graph considered in Theorem~\ref{thm:zi_mar_bound_2}: proxy-augmented ZI MAR model satisfying \textbf{A1}$^*$ and \textbf{A2}$^*$ (Fig.~\ref{fig:zi_mcar_mar} (c) in the main paper).
  }
  \label{fig:zi_mar_apx_2}
\end{figure}

\begin{thma}{\ref{thm:zi_mar_bound_2}}
  Consider a ZI MAR model in Fig.~\ref{fig:zi_mcar_mar} (c) (reproduced in Fig.~\ref{fig:zi_mar_apx_2})
  under proxy assumptions {\bf A1$^*$} and {\bf A2$^*$},
  with categorical $X, C$ and binary $R, W$.
  Given a consistent observed law $p(X, W, C)$ satisfying positivity assumption, $\forall x, c, p(x,c) > 0$,
  the set of compatible proxy-indicator conditional distributions $q(W \mid R)$ is given by
  \begin{align*}
      q_{w_0 \mid r_1} &= p_{w_0 \mid x_1}\\
      q_{w_0 \mid r_0} &\in
      \begin{cases}
      [ \max_{c} p_{w_0 \mid x_0, c}, 1] \text{ if } \exists \tilde{c}, p_{w_0 \mid x_0, \tilde{c}} > p_{w_0 \mid x_1}\\
      [ 0, \min_c p_{w_0 \mid x_0, c}] \text{ if } \exists \tilde{c}, p_{w_0 \mid x_0, \tilde{c}} < p_{w_0 \mid x_1}\\
      (0, 1) \setminus \{ p_{w_0 \mid x_1} \}\text{ if } \forall c, p_{w_0 \mid x_0, c} = p_{w_0 \mid x_1}
      \end{cases}
  \end{align*}
  These bounds are sharp.
  Moreover, if $\forall c, p_{w_0 \mid x_0, c} = p_{w_0 \mid x_1}$, $p(X, W, C)$ must satisfy $\forall c, 0 < p_{w_0 \mid x_0, c} < 1$,
  and zero inflation does not occur, i.e., $q(R=0)=0$.
\end{thma}

\begin{proof}
\item
\textbf{Model definition.}
We assume $C$ is a cardinal variable, taking values in a finite set $\mathcal{C}$.
Any joint distribution in this ZI MAR model is
\begin{equation}
q \left( X^{(1)}, R, X, W, C \right) = q \left( X^{(1)}, R, X \mid C \right) q(W \mid R) q(C)
\end{equation}
Since the Markov factors are variationally independent, the ZI MAR model is a Cartesian product
\begin{equation}
\begin{aligned}
\mathcal{P}^{(1)}_{\text{ZI MAR}}
&=
\left( \textstyle \bigotimes_{c \in \mathcal{C}} \mathcal{P}_{X^{(1)}XR} (c) \right) \otimes \mathcal{P}_{W \mid R} \otimes \mathcal{P}_{C}
\\
\mathcal{P}_{C} &=\{q(C)\},
\\
\mathcal{P}_{W \mid R}  &=\{q( W \mid R) \mid \det \mathbf{q}_{W \mid R} \neq 0\},
\\
\mathcal{P}_{X^{(1)}XR} (c) &= \left\{q(X^{(1)}, R, X \mid c) \right\} \\
\end{aligned}
\end{equation}
Lemma~\ref{lm:zi_marginal_constraints} says $\mathcal{P}_{X^{(1)}XR} (c)$ is 1-to-1 to the set $\mathcal{P}_{XR}(c) = \left\{p(R, X \mid c) \mid {\bf Z}\right\}$.
Hence, we are interested in the marginal model
\begin{equation}
\begin{aligned}
\mathcal{P}_{\text{ZI MAR}}
&=
\left( \textstyle \bigotimes_{c \in \mathcal{C}} \mathcal{P}_{XR} (c) \right) \otimes \mathcal{P}_{W \mid R} \otimes \mathcal{P}_{C}.
\end{aligned}
\end{equation}

\textbf{Finding compatible set.}

Given observed law $\mathbf{p}_{WX C}$, we want to find the compatible set w.r.t. this law
\begin{equation}
\begin{aligned}
\mathcal{Q}
&= \left\{ \mathbf{q}_{XRWC} \mid \mathbf{q}_{XRWC} \in \mathcal{P}_{\text{ZI MAR}}, \: \forall c \in \mathcal{C} \left( \mathbf{q}_{W \mid R} \mathbf{q}_{RX \mid c} = \mathbf{p}_{WX \mid c} \right), \: \forall c \in \mathcal{C} \left( q(c) = p(c) \right) \right\}. \\
\end{aligned}
\end{equation}

This is similar to the compatible set we consider when {\bf A1$^\dag$, A2$^\dag$} hold (e.g., when $C \rightarrow W$), except the same $\mathbf{q}_{W \mid R}$ is shared between the constraints $\mathbf{q}_{W \mid R} \mathbf{q}_{RX \mid c} = \mathbf{p}_{WX \mid c}$. Each constraint restricts $\mathbf{q}_{W \mid R}$ in a different way, hence we cannot write $\mathcal{Q}$ as a Cartesian product to separate the constraints as we did before.

To proceed, note that $\mathcal{Q}$ is 1-to-1 to a set containing only $\mathbf{q}_{W \mid R}$, just as in ZI MCAR proof.
\begin{equation}
\begin{aligned}
\mathcal{Q}
&=
\left\{
    (\mathbf{q}_{W \mid R}, (\mathbf{q}_{RX \mid c})_{c \in \mathcal{C}}) \mid
    \forall c \left( \mathbf{q}_{RX \mid c} = [\mathbf{q}_{W \mid R}]^{-1} \mathbf{p}_{WX \mid c} \right), \: \mathbf{q}_{W \mid R} \in \mathcal{B}
\right\}
\otimes \{p(C)\}
\\
\mathcal{B}
  &=
  \left\{
    \begin{array}{c|c}
      \mathbf{q}_{W \mid R}
      &
      \begin{aligned}
        & & &\textstyle \mathbf{q}_{W \mid R} \geq 0, \: \forall r \left(\sum_w q_{w \mid r} = 1 \right), \: q_{w_0 \mid r_0} \neq q_{w_0 \mid r_1}, \\
        & & &\textstyle \text{for each } c: \mathbf{q}_{RX \mid c} \geq 0, \: \sum_{rx} q_{rx \mid c} = 1, \: \forall x \neq 0 (q_{r_0 x c} = 0), \\
        & & &\textstyle \quad \quad \quad \quad \quad  \text{where } \mathbf{q}_{RX \mid c} = [\mathbf{q}_{W \mid R}]^{-1} \mathbf{p}_{WX \mid c}.
      \end{aligned}
    \end{array}
  \right\}.
\end{aligned}
\end{equation}
This set is the intersection $\mathcal{B} = \cap_{c \in \mathcal{C}} \mathcal{B}_{c}$, in which each $\mathcal{B}_{c}$ contains only constraints associated with values $c$.
\begin{equation}
\begin{aligned}
\mathcal{B}_{c}
  &=
  \left\{
    \begin{array}{c|c}
      \mathbf{q}_{W \mid R}
      &
      \begin{aligned}
        & & &\textstyle \mathbf{q}_{W \mid R} \geq 0, \: \forall r \left(\sum_w q_{w \mid r} = 1 \right), \: q_{w_0 \mid r_0} \neq q_{w_0 \mid r_1}, \\
        & & &\textstyle \mathbf{q}_{RX \mid c} \geq 0, \: \sum_{rx} q_{rx \mid c} = 1, \: \forall x \neq 0 \left( q_{r_0 x c} = 0 \right), \\
        & & &\textstyle \text{where } \mathbf{q}_{RX \mid c} = [\mathbf{q}_{W \mid R}]^{-1} \mathbf{p}_{WX \mid c}. \\
      \end{aligned}
    \end{array}
  \right\}.
\end{aligned}
\end{equation}

We have already solved $\mathcal{B}_{c}$ before, it is the ZI MCAR compatibility set of $q(W \mid R)$ in Theorem~\ref{thm:zi_mcar_bound}. Then all we need is to take the intersection of these results, one for each $c$.
This intersection $\mathcal{B} = \cap_{c \in \mathcal{C}} \mathcal{B}_{c}$ is non-empty, because there is some $q(R, X, W, C)$ produces the given observed law.
First, the identification of $q_{w_0 \mid r_1}$ and marginal constraints are
\begin{equation}
  \label{eq:zi_mar_obs_constraint_in_proof}
  \forall c \in \mathcal{C}, \forall x \neq 0, q_{w_0 \mid r_1} = p_{w_0 \mid x_1, c} = p_{w_0 \mid x, c}.
\end{equation}
The last equality is due to the marginal constraint discussed in Theorem~\ref{thm:zi_mcar_bound}. Then we can write
\begin{equation}
  q_{w_0 \mid r_1} = p_{w_0 \mid x_1}.
\end{equation}

Next, we consider each case of the bound for $q_{w_0 \mid r_0}$.
\begin{enumerate}
  \item Suppose $p_{w_0 \mid x_0, c'} > p_{w_0 \mid x_1, c'}$ for some $c'$,
    then by Theorem~\ref{thm:zi_mcar_bound}, $q_{w_0 \mid x_0} > p_{w_0 \mid x_1, c'} = q_{w_0 \mid r_1}$,
    where last equality follows from equation~\ref{eq:zi_mar_obs_constraint_in_proof}.
  \item Suppose $p_{w_0 \mid x_0, c''} < p_{w_0 \mid x_1, c''}$ for some $c''$,
    then by Theorem~\ref{thm:zi_mcar_bound}, $q_{w_0 \mid x_0} < p_{w_0 \mid x_1, c''} = q_{w_0 \mid r_1}$,
    where last equality follows from equation~\ref{eq:zi_mar_obs_constraint_in_proof}.
\end{enumerate}

This means these 2 cases disjoint, i.e., we must have the following marginal constraint
\begin{equation}
    \text{either } \forall c \left(p_{w_0 \mid x_0, c} \leq p_{w_0 \mid x_1} \right) \text{ or } \forall c \left(p_{w_0 \mid x_0, c} \geq p_{w_0 \mid x_1}\right).
\end{equation}
The corresponding bounds are
\begin{align*}
    q_{w_0 \mid r_1} &= p_{w_0 \mid x_1}\\
    q_{w_0 \mid r_0} &\in
    \begin{cases}
    [ \max_{c} p_{w_0 \mid x_0, c}, 1] \text{ if } \exists c', p_{w_0 \mid x_0, c'} > p_{w_0 \mid x_1}, \\
    [ 0, \min_c p_{w_0 \mid x_0, c}] \text{ if } \exists c', p_{w_0 \mid x_0, c'} < p_{w_0 \mid x_1}, \\
    (0, 1) \setminus \{ p_{w_0 \mid x_1} \}\text{ if } \forall c, 0 < p_{w_0 \mid x_0, c} = p_{w_0 \mid x_1} < 1.
    \end{cases}
\end{align*}
The max/min appears since we take the intersection of the bounds for $c$.
Moreover, if $\forall c, p_{w_0 \mid x_0, c} = p_{w_0 \mid x_1}$, then $\forall c, 0 < p_{w_0 \mid x_0, c} < 1$ is an additional condition,
and zero inflation does not occur, i.e., $q(R=0) = 0$.
Due to the marginal constraints above, this exhausts all the cases.

Since the compatibility set $\mathcal{B}$ is the intersection of each $\mathcal{B}_c$, each is sharp in their own ZI MCAR model, the above abound is sharp.

\end{proof}

We collect the marginal constraints obtained from the proofs of Theorem~\ref{thm:zi_mar_bound_1} and Theorem~\ref{thm:zi_mar_bound_2} into the following lemma,

\begin{lma}{\ref{lm:zi_mar_obs_constraint}}
    For a ZI MAR model in Fig.~\ref{fig:zi_mcar_mar} (b) under 
    {\bf A1$^\dag$} and {\bf A2$^\dag$}, the observed law $p(X, W, C)$ obeys
    \begin{equation}
        \forall c, \forall x \neq 0, p_{w_0 \mid x, c} = p_{w_0 \mid x_1, c}.
    \end{equation}
    For a ZI MAR model in Fig.~\ref{fig:zi_mcar_mar} (c) under 
    {\bf A1$^*$} and {\bf A2$^*$}, the observed law $p(X, W, C)$ obeys
    \begin{align}
        & \forall c, \forall x \neq 0, p_{w_0 \mid x, c} = p_{w_0 \mid x_1}, \\
        \notag
        & \text{either } \forall c \left(p_{w_0 \mid x_0, c} \leq p_{w_0 \mid x_1} \right) \text{ or } \forall c \left(p_{w_0 \mid x_0, c} \geq p_{w_0 \mid x_1}\right).
    \end{align}
\end{lma}


\subsection{ZI MNAR proofs}

\begin{figure}[!htb]
  \centering
    \begin{tikzpicture}[>=stealth, node distance=1.35cm]
      \tikzstyle{obs} = [circle, inner sep=1pt]
      \tikzstyle{unobs} = [circle, inner sep=0pt]
      \begin{scope}
        \path[->, very thick]
        node[unobs, inner sep=1pt] (x11) {$X^{(1)}_1$}
        node[unobs, right of=x11, inner sep=1pt] (x21) {$X^{(1)}_2$}
        node[unobs, below of=x11, inner sep=1pt] (r1) {$R_1$}
        node[unobs, below of=x21, inner sep=1pt] (r2) {$R_2$}
        node[obs, above of=r1, xshift=-0.9546cm, inner sep=1pt] (x1) {$X_1$}
        node[obs, above of=r2, xshift=+0.9546cm, inner sep=1pt] (x2) {$X_2$}
        node[obs, below right of=r2, yshift=0.9546cm, inner sep=1pt] (w2) {$W_2$}
        node[obs, below left of=r1, yshift=0.9546cm, inner sep=1pt] (w1) {$W_1$}

        (x11) edge[blue] (x21)

        (r1) edge[blue] (w1)
        (r2) edge[blue] (w2)
        
        (x11) edge[blue] (r2)
        (x21) edge[blue] (r1)
        
        (x11) edge[gray, bend right=0] (x1)
        (r1) edge[gray] (x1)
        (x21) edge[gray, bend left=0] (x2)
        (r2) edge[gray] (x2)

        node[below of=r1, xshift=+0.65cm, yshift=+0.7cm] (l) {$(a)$}
        ;
      \end{scope}
      \begin{scope}[xshift=5cm]
        \path[->, very thick]
        node[unobs, inner sep=1pt] (x11) {}
        node[unobs, right of=x11, inner sep=1pt] (x21) {}
        node[unobs, below of=x11, inner sep=1pt] (r1) {$R_1$}
        node[unobs, below of=x21, inner sep=1pt] (r2) {}
        node[obs, above of=r1, xshift=-0.9546cm, inner sep=1pt] (x1) {$X_1$}
        node[obs, above of=r2, xshift=+0.9546cm, inner sep=1pt] (x2) {$X_2$}
        node[obs, below right of=r2, yshift=0.9546cm, inner sep=1pt] (w2) {$W_2$}
        node[obs, below left of=r1, yshift=0.9546cm, inner sep=1pt] (w1) {$W_1$}

        (r1) edge[blue] (w1)
        (x2) edge[blue] (x1)
        (x2) edge[blue] (r1)
        (x2) edge[blue] (w2)
        (w2) edge[blue] (x1)
        (w2) edge[blue] (r1)
        
        (r1) edge[blue] (x1)

        node[below of=r1, xshift=+0.65cm, yshift=+0.7cm] (l) {$(b)$}
        ;
      \end{scope}
    \end{tikzpicture}
  \caption{
    (a) The ZI bivariate block-parallel model.
    (b) The model Markov to this graph contains the marginal model for $p(X_1, X_2, R_1, W_1, W_2)$ in (a).
  }
  \label{fig:zi_mnar_apx}
\end{figure}
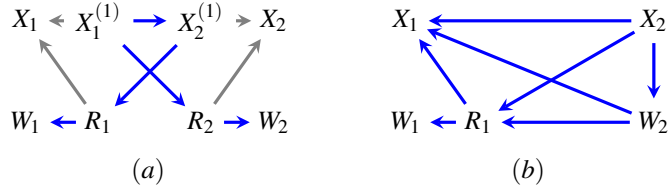

\begin{lma}{\ref{lm:general_obs_constraint}}
    Consider any ZI model in Section~\ref{subsec:zi_law_rest} under {\bf A1$^*$} and {\bf A2$^*$}. Denote $Z_k \triangleq \{X, W, C\} \setminus \{W_k, X_k\}$. The observed law $p(X, W, C)$ must satisfy, for each $k$,
    \begin{equation}
        \begin{cases}
        \forall z_k, \forall x \neq 0, p_{w_{k0} \mid x_{k}=x, z_k} = p_{w_{k0} \mid z_{k1}}, \\
        \forall z_k \left( p_{w_{k0} \mid x_{k0}, z_k} \leq p_{w_{k0} \mid x_{k1}} \right) \text{ or } \forall z_k \left(p_{w_{k0} \mid x_{k0}, z_k} \geq p_{w_{k0} \mid z_{k1}}\right).
        \end{cases}
    \end{equation}
\end{lma}

\begin{proof}
    Let $\tilde{\mathcal{G}}$ be the graph where $\{X_k, Z_k, R_k\}$ is fully connected, and $R_k \rightarrow W_k$. The Markov model $\mathcal{P}_{\tilde{\mathcal{G}}}$ for this graph contains all joint distributions $q(X_k, W_k, Z_k, R_k) = q(W_k \mid R_k) q(X_k, R_k, Z_k)$, where $q(X_k, R_k, Z_k)$ is from the saturated model restricted by \textbf{Z}. In the original ZI model, we have $p(X, W, C, R_k) = p(W_k \mid R_k) p(X_k, R_k, Z_k)$ with $p(X_k, R_k, Z_k)$ satisfying \textbf{Z}. Hence the model for this joint distribution is contained in $\mathcal{P}_{\tilde{\mathcal{G}}}$.

    Subsequently, one could repeat the proof of Theorem~\ref{thm:zi_mar_bound_2} to find the bound for $q(W \mid R)$, with $W_k$ as $W$, $X_k$ as $X$, and $Z_k$ as $C$. The bound is not sharp, because we do not know the exact model for $q(X_k, R_k, Z_k)$. Similarly, application of lemma~\ref{lm:zi_mar_obs_constraint} yields the desired constraints.
\end{proof}


\section{Simulations}

\subsection{Bound validity in random DGPs}

DGPs for ZI MCAR and ZI MAR are randomly selected according to Fig.~\ref{fig:zi_mcar_mar} (a) and (b), respectively.
In particular, a DGP for ZI MAR is a joint distribution which factorizes as
\begin{equation}
\begin{aligned}
p(X,X^{(1)}, R,W,C)
=
p(X \mid R, X^{(1)}) p(X^{(1)} \mid C) p(R \mid C) p(C) p(W \mid R)
\end{aligned}
\end{equation}
Then, the observed law is $p(X, W, C) = \sum_{X^{(1)}, R} p(X,X^{(1)}, R,W,C)$.

We randomly select a DGP by sampling the following parameters
\begin{equation}
    \begin{aligned}
        p(C=0) &\sim \operatorname{Uniform}[0,1] \\
        p(X^{(1)}=0 \mid C=0) &\sim \operatorname{Uniform}[0,1] \\
        p(X^{(1)}=0 \mid C=1) &\sim \operatorname{Uniform}[0,1] \\
        p(R=0 \mid C=0) &\sim \operatorname{Uniform}[0,1] \\
        p(R=0 \mid C=1) &\sim \operatorname{Uniform}[0,1] \\
        p(W=0 \mid R=0) &\sim \operatorname{Uniform}[0,1] \\
        p(W=0 \mid R=1) &\sim \operatorname{Uniform}[0,1] \\
    \end{aligned}
\end{equation}
Further more, to satisfy the ZI-consistency
\begin{equation}
    \begin{aligned}
        p(X=0 \mid R=0, X^{(1)}) &= 1; &  p(X=1 \mid R=1, X^{(1)}=1) &= 1; &  p(X=0 \mid R=1, X^{(1)}=0) &= 1.
    \end{aligned}
\end{equation}



\newpage
\subsection{Numerical bounds results}

We compute numerical bounds using method in \citet{duarte23automated} and compare to our analytical bounds for DGPs in ZI MCAR and ZI MAR.
Since computation time for the dual bound may be very long (some DGP might take more than 36 hours), we report only DGPs where primary bound is available (whose computation time may take only a few minutes). We refer reader to original paper for distinction of dual/primal bounds.

\begin{table}[ht]
    \footnotesize
    \centering
    \begin{tabular}{|c|c|c|c|c|c|}
    \hline
    \text{dgp}  &  \text{lb}  &  \text{ub}  &  \text{num lb}  &  \text{num ub}  &  $p_{w0 \mid r0}$ \\
    \hline \hline
0   &  0.556406  &  1.0       &  0.556411  &  1.0       &  0.820732  \\
1   &  0.357830  &  1.0       &  0.357830  &  1.0       &  0.493695  \\
2   &  0.0       &  0.520689  &  0.0       &  0.520689  &  0.453609  \\
4   &  0.606499  &  1.0       &  0.606499  &  1.0       &  0.682699  \\
5   &  0.0       &  0.524069  &  0.0       &  0.524061  &  0.496676  \\
6   &  0.381825  &  1.0       &  0.381825  &  1.0       &  0.441227  \\
8   &  0.652288  &  1.0       &  0.652288  &  1.0       &  0.659347  \\
9   &  0.698149  &  1.0       &  0.698149  &  1.0       &  0.738794  \\
10  &  0.0       &  0.443595  &  0.0       &  0.443595  &  0.442502  \\
11  &  0.656867  &  1.0       &  0.656867  &  1.0       &  0.850498  \\
12  &  0.211359  &  1.0       &  0.211359  &  1.0       &  0.856658  \\
14  &  0.183034  &  1.0       &  0.183034  &  1.0       &  0.303129  \\
15  &  0.648430  &  1.0       &  0.648430  &  1.0       &  0.833933  \\
16  &  0.292337  &  1.0       &  0.292337  &  1.0       &  0.307559  \\
17  &  0.500542  &  1.0       &  0.500542  &  1.0       &  0.553972  \\
18  &  0.0       &  0.102988  &  0.0       &  0.102988  &  0.087253  \\
20  &  0.0       &  0.479532  &  0.0       &  0.479532  &  0.238318  \\
21  &  0.426615  &  1.0       &  0.426615  &  1.0       &  0.426787  \\
22  &  0.399169  &  1.0       &  0.399169  &  1.0       &  0.494816  \\
23  &  0.0       &  0.216052  &  0.0       &  0.216052  &  0.158163  \\
24  &  0.436636  &  1.0       &  0.436636  &  1.0       &  0.533412  \\
26  &  0.429579  &  1.0       &  0.429579  &  1.0       &  0.710488  \\
27  &  0.0       &  0.500198  &  0.0       &  0.500199  &  0.451856  \\
28  &  0.0       &  0.383471  &  0.0       &  0.383471  &  0.136093  \\
29  &  0.0       &  0.325871  &  0.0       &  0.325871  &  0.070747  \\
30  &  0.363744  &  1.0       &  0.363744  &  1.0       &  0.374293  \\
    \hline
    \end{tabular}
    \caption{
        Comparison between our analytical lower and upper bound (\textit{lb}/\textit{ub})
        to numerical bounds (\textit{num lb}/\textit{num ub}) for a randomly selected set of DGPs in ZI MCAR model corresponding to Fig.~\ref{fig:zi_mcar_mar} (a)
        (reproduced in Fig.~\ref{fig:zi_mcar_apx}). True $p_{w_0|r_0}$ is reported.
    }
    \label{tab:zi_mcar_compare}
\end{table}

\newpage

\begin{table}[ht]
    \footnotesize
    \centering
    \begin{tabular}{|c|c|c|c|c|c|}
    \hline
    \text{dgp}  &  \text{lb}  &  \text{ub}  &  \text{num lb}  &  \text{num ub}  &  $p_{w0 \mid r0}$ \\
    \hline \hline
0   &  0.0       &  0.429089  &  0.0       &  0.429089  &  0.413267  \\
1   &  0.834644  &  1.0       &  0.834644  &  1.0       &  0.848638  \\
2   &  0.0       &  0.340484  &  0.0       &  0.340484  &  0.319264  \\
3   &  0.300217  &  1.0       &  0.300217  &  1.0       &  0.515513  \\
4   &  0.582249  &  1.0       &  0.582249  &  1.0       &  0.688620  \\
5   &  0.938604  &  1.0       &  0.938604  &  1.0       &  0.991572  \\
6   &  0.0       &  0.147758  &  0.0       &  0.147758  &  0.053637  \\
7   &  0.534321  &  1.0       &  0.534321  &  1.0       &  0.569545  \\
8   &  0.720775  &  1.0       &  0.720775  &  1.0       &  0.726467  \\
9   &  0.585611  &  1.0       &  0.592962  &  1.0       &  0.686385  \\
10  &  0.261442  &  1.0       &  0.261442  &  1.0       &  0.303129  \\
11  &  0.378136  &  1.0       &  0.378136  &  1.0       &  0.481036  \\
12  &  0.0       &  0.729282  &  0.0       &  0.729282  &  0.703234  \\
13  &  0.425249  &  1.0       &  0.425249  &  1.0       &  0.426797  \\
14  &  0.612665  &  1.0       &  0.612665  &  1.0       &  0.632688  \\
15  &  0.319180  &  1.0       &  0.319180  &  1.0       &  0.628988  \\
16  &  0.0       &  0.702582  &  0.0       &  0.702582  &  0.660187  \\
17  &  0.661849  &  1.0       &  0.661849  &  1.0       &  0.726963  \\
18  &  0.594456  &  1.0       &  0.594456  &  1.0       &  0.600720  \\
19  &  0.531541  &  1.0       &  0.532509  &  1.0       &  0.536156  \\
21  &  0.596110  &  1.0       &  0.600331  &  1.0       &  0.606306  \\
22  &  0.513144  &  1.0       &  0.513144  &  1.0       &  0.692834 \\
23  &  0.0       &  0.560519  &  0.0       &  0.560519  &  0.536384  \\
24  &  0.837194  &  1.0       &  0.837212  &  1.0       &  0.844800  \\
25  &  0.0       &  0.443658  &  0.0       &  0.443658  &  0.302563  \\
26  &  0.469323  &  1.0       &  0.469323  &  1.0       &  0.479800  \\
28  &  0.688084  &  1.0       &  0.688084  &  1.0       &  0.826820  \\
30  &  0.0       &  0.230720  &  0.0       &  0.230720  &  0.103594  \\
    \hline
    \end{tabular}
    \caption{
        Comparison between our analytical lower and upper bound (\textit{lb}/\textit{ub})
        to numerical bounds (\textit{num lb}/\textit{num ub}) for a randomly selected set of DGPs in ZI MAR model corresponding to Fig.~\ref{fig:zi_mcar_mar} (c)
        (reproduced in Fig.~\ref{fig:zi_mar_apx_2}). True $p_{w_0|r_0}$ is reported.
    }
    \label{tab:zi_mar_compare}
\end{table}


\newpage

\section{Variable descriptions in the CLABSI Data Application}

This section describes the covariates used in the CLABSI data application (all coded as binary variables).  These covariates correspond to types of therapy, and types of catheter used.
\begin{itemize}
 \item \texttt{Pediatrics}: the CVC therapy is tailored for children.
 \item \texttt{Chemotherapy}: the CVC therapy is used to administer chemotherapy.
 \item \texttt{OPAT}: outpatient parenteral antimicrobial therapy (IV antibiotics).
 \item \texttt{TPN}: parenteral nutrition delivered via the VC.
 \item \texttt{Other therapy}: any other type of therapy not included in the above categories, such as hydration.
 \item \texttt{Port}: a type of CVC in use.
 \item \texttt{PICC}: peripherally inserted central catheter, another type of CVC.
 \item \texttt{Tunneled CVC}: a CVC tunneled under the skin.
\end{itemize}

\end{document}